\newtheorem{theorem}{Theorem}[section]
\newtheorem{lemma}{Lemma}[section]
\newtheorem{definition}{Definition}[section]
\newcommand\braket[1] {\langle {#1} \rangle}
\newcommand\bra[1] {\langle {#1} |}
\newcommand\ket[1] {| {#1} \rangle}
\newcommand\jpcomp[1] {j_{\mathrm{p};{#1}}}
\newcommand\jpvec {{\mathbf{j}_{\mathrm{p}}}}
\newcommand\jpvecix[1] {\mathbf{j}_{\mathrm{p};{#1}}}
\newcommand{\tr}{\operatorname{Tr}}
\newcommand{\diag}{\operatorname{diag}}
\renewcommand{\vec}[1]{\mathbf{#1}}
\newcommand{\RR}{\mathbb{R}}
\newcommand{\CC}{\mathbb{C}}
\newcommand{\rmi}{\mathrm{i}}
\newcommand{\loc}{{\mathrm{loc}}}
\renewcommand{\Re}{\operatorname{Re}}
\renewcommand{\Im}{\operatorname{Im}}
\begin{document}

\title{Fermion $N$-representability for prescribed density and paramagnetic current density}

\author{Erik Tellgren}
\email{erik.tellgren@kjemi.uio.no}
\affiliation{University of Oslo, Centre for Theoretical and
  Computational Chemistry,
  N-0315 Oslo, Norway}
\author{Simen Kvaal}
\affiliation{University of Oslo, Centre for Theoretical and
  Computational Chemistry,
  N-0315 Oslo, Norway}
\author{Trygve Helgaker}
\affiliation{University of Oslo, Centre for Theoretical and
  Computational Chemistry,
  N-0315 Oslo, Norway}

\begin{abstract}
  The $N$-representability problem is the problem of determining
  whether or not there exists $N$-particle states with some prescribed
  property. Here we report an affirmative solution to the fermion
  $N$-representability problem when both the density and paramagnetic
  current density are prescribed. This problem arises in
  current-density functional theory and is a generalization of the
  well-studied corresponding problem (only the density prescribed) in
  density functional theory. Given any density and paramagnetic
  current density satisfying a minimal regularity condition
  (essentially that a von Weiz\"acker-like the canonical kinetic
  energy density is locally integrable), we prove that there exist a
  corresponding $N$-particle state. We prove this by constructing an
  explicit one-particle reduced density matrix in the form of a
  position-space kernel, i.e.\ a function of two continuous position
  variables.
  In order to make minimal assumptions, we
  also address mathematical subtleties regarding the diagonal of, and
  how to rigorously extract paramagnetic current densities from,
  one-particle reduced density matrices in kernel form.
\end{abstract}

\maketitle 

\section{Introduction}

The question of $N$-representability has been studied extensively in
quantum chemistry and related fields~\cite{COLEMAN_RMP35_668}. In
particular, it plays an important role in density-functional
theory (DFT). Given prescribed values for quantities in a fermionic system, e.g.\ its electron density or
its reduced density matrix, one may ask whether or not it can be obtained
from a Slater determinant, from a pure $N$-particle state, or from a mixed
$N$-particle state. Regarding the density, it is well known that any density with
a finite von Weizs\"acker kinetic energy may be reproduced using a
Slater determinant that also 
has finite kinetic energy~\cite{MACKE_PR100_992,GILBERT_PRB12_2111,HARRIMAN_PRA24_680,ZUMBACH_PRA28_544,GHOSH_JCP82_3307}.
For a one-particle reduced density
matrix (1-rdm), Slater-determinantal representability is equivalent
to idempotency; in general, however, pure-state $N$-representability of 1-rdms is a difficult and
largely unsolved
problem~\cite{RUSKAI_PR169_101,LUDENA_JMST123_371,RUSKAI_JPA40_F961}. On the other hand,
any 1-rdm with spin-orbital occupation numbers (eigenvalues) in
the range $[0,1]$ and trace $N$ may be obtained from a mixed
$N$-particle state.

In this note, we report the solution to the mixed-state
$N$-representability problem when both the density and paramagnetic
current density are prescribed. More precisely, we answer the following question: given a density
$\rho(\mathbf{r})$ and a paramagnetic current density $\jpvec(\mathbf{r})$, does
there exist a mixed state $\Gamma$ with the prescribed density and
current density, written $\Gamma\mapsto(\rho,\jpvec)$? We answer this question affirmatively by 
constructing an explicit 1-rdm, from which the existence of the $N$-particle state
follows.

We note that standard constructions demonstrating the
Slater-determinantal $N$-representability when only the density is
prescribed rely on the use of equidensity orbitals. Also, early work by Ghosh and
Dhara~\cite{GHOSH_PRA38_1149} sketched a construction of such
equidensity orbitals that reproduce  both densities and currents. However,
such solutions have limited scope, since the vorticity vanishes when
orbitals give rise to the same density.

During the preparation of the present work, one of us (S.~Kvaal) met
E.H.~Lieb during a trimester at Institut Henri Poincar\'e in Paris in
July 2013, and became aware that together with R.~Schrader, he had
shown a Slater determinant representability result for $(\rho,\jpvec)$
and $N\geq 4$. This work has now been published
\cite{LIEB_SCHRADER_2013}.

Clearly, $N$-representability via a Slater determinant implies
representabiility via a mixed state. However, Lieb and Schrader's
result requires $N\geq 4$, and they also give a counterexample for
$N=2$, where no Slater determinant can exist (with continuously
differentiable and single-valued orbital phase functions). Our result,
while showing a weaker sense of $N$-representability, has no condition
on $N$. Moreover, both the present work and
Ref.~\cite{LIEB_SCHRADER_2013} have mild regularity and decay
assumptions on $(\rho,\jpvec)$ that ensure representability, but these
are different in the two approaches. The techniques of proof are also otherwise
significantly different: Lieb and Schrader rely on the so-called
smooth Hobby--Rice theorem, while our approach is by
direct construction of a 1-rdm. The present work and
the work of Lieb and Schrader are complementary, offering two
different points of view and solutions to a long-standing problem.

%

The remainder of this paper contains five sections. In Section\;\ref{background}, we give some background information and establish notation.
Following a discussion of the relationship between a reduced density matrix
and its associated density and paramagnetic current
density in Section\;\ref{secDIAG}, we construct in Sections\;\ref{redmat} and~\ref{cankin} a reduced density matrix 
for a prescribed density and paramagnetic current density. 
Section\;\ref{discussion} contains some concluding remarks.
Finally, two appendices are also provided. Appendix~A contains a brief overview
of some mathematical concepts and results on Hilbert--Schmidt
operators needed for the main results
of Section~\ref{secDIAG}. Appendix~B contains proofs of theorems in
Section~\ref{secDIAG}.

\section{Background}
\label{background}

The $N$-representability problem with prescribed density $\rho$ and
paramagnetic current density $\jpvec$ arises in current-density
functional theory (CDFT)~\cite{VIGNALE_PRL59_2360}. In CDFT, a
magnetic vector potential $\mathbf{A}$, in addition the scalar potential $v$,
enters the (spin-free) $N$-electron Hamiltonian. In atomic units,
\begin{align}
 H[v,\mathbf{A}] &= \frac{1}{2}\sum_{k=1}^N (- \mathrm i\boldsymbol \nabla_{k} + \mathbf{A}(\mathbf{r}_k))^2 \nonumber \\ & \quad\quad\quad + \sum_{k=1}^N v(\mathbf{r}_k) + \sum_{k<l} \frac{1}{r_{kl}}.
\end{align}
Here $\mathbf r_k$ is the position of electron $k$, the operator $\boldsymbol \nabla_k$ differentiates with respect to $\mathbf r_k$, and
$r_{kl}$ is the distance between electrons $k$ and $l$.
The corresponding ground-state energy is given by the Rayleigh--Ritz variation principle,
\begin{equation}
  E[v,\mathbf{A}] = \inf_{\Gamma} \mathrm{Tr}( \Gamma H[v,\mathbf{A}])
   \label{eq0:energy}
\end{equation}
where the minimization is over all mixed states $\Gamma$ with
a finite canonical kinetic energy 
\begin{equation}
T[\Gamma] := \frac{1}{2} \tr \left(\boldsymbol \nabla\Gamma\boldsymbol \nabla^\dag \right).
\end{equation}
Introducing the constrained-search universal functional
\begin{equation}
F[\rho,\jpvec] =
\inf_{\Gamma \mapsto \rho,\jpvec} \mathrm{Tr} \left( \Gamma H[0,\mathbf{0}] \right)  ,
\end{equation}
we may rewrite the Rayleigh--Ritz variation principle in Eq.\,(\ref{eq0:energy}) in the form of a Hohenberg--Kohn variation principle,
\begin{equation}
  E[v,\mathbf{A}] 
   = \inf_{\rho,\jpvec} \Bigl( \! F[\rho,\jpvec] + 
\int \! \Bigl (\rho \,(v + \tfrac{1}{2} A^2) +
     \jpvec\cdot\mathbf{A}\Bigr)  \mathrm d\mathbf{r}  \Bigr).
   \label{eq:energy}
\end{equation}
The mixed-state $N$-representability problem is
directly related to how large the search domain in~Eq.\;(\ref{eq:energy}) needs to be: if no 
$\Gamma\mapsto(\rho,\mathbf{j}_\text{p})$ exists, then
$F[\rho,\jpvec]=+\infty$ by definition. 

In Kohn--Sham
theory, the idea is to express the densities in Eq.\,(\ref{eq:energy}) in terms of
a single Slater determinant of non-interacting
particles and to approximate the kinetic-energy contributions to
$F[\rho,\jpvec]$ by the non-interacting kinetic energy,
\begin{equation}
  T_s[\rho,\jpvec] = \inf_{\{\phi_k\}_{k=1}^N \mapsto \rho,\jpvec}
  \frac{1}{2}\sum_{k=1}^N \braket{\boldsymbol \nabla\phi_k, \boldsymbol \nabla\phi_k},
\label{sdnrep}
\end{equation}
where the infimum is over an orthonormal set of orbitals $\phi_k$ or, equivalently, the
corresponding Slater determinants or idempotent 1-rdms. 
At this point, the
Slater-determinantal $N$-representability problem arises. 

In general,
densities $\rho$ and $\jpvec$ arising from a single orbital have
a vanishing paramagnetic vorticity,
\begin{equation}
  \boldsymbol{\nu} = \boldsymbol \nabla\times\frac{\jpvec}{\rho} = 0,
\end{equation}
except for possible Dirac-delta singularities at points $\mathbf r$ where $\rho(\mathbf r) =
0$. Consequently, a closed-shell two-particle Kohn--Sham system 
can only reproduce paramagnetic densities
with vanishing vorticity. In general, therefore an extended Kohn--Sham approach
with fractional occupation numbers is required (see
Refs.\;\cite{CANCES_JCP114_10616,CANCES_JCP118_5364,KRAISLER_PRA80_032115,NYGAARD_JCP138_094109}
for work in this direction),
\begin{equation}
  \bar{T}_s[\rho,\jpvec] = \inf_{\{n_k\phi_k\} \mapsto \rho,\jpvec}
  \frac{1}{2} \sum_{k=1}^{\infty} n_k  \braket{\boldsymbol \nabla\phi_k,\boldsymbol \nabla\phi_k},
\end{equation}
where orthonormality, $0 \leq n_k \leq 1$, and $\sum_k n_k = N$ are
additional constraints on the infimum. Alternatively, since $n_k$ and
$\phi_k$ are eigenvalues and eigenvectors of 1-rdms, the minimization may equivalently 
be performed over 1-rdms. Here, the mixed-state $N$-representability problem appears.

For a mixed state 
$\Gamma\mapsto(\rho,\jpvec)$, it is known that
\begin{align}
  T_{\text{W}}[\rho] &+ T_\text{p}[\rho,\mathbf{j}_\text{p}] \leq T[\Gamma],
\end{align}
where the von Weizs{\"a}cker kinetic-energy functionals are given by
\begin{align}
  T_{\text{W}}[\rho]  &:= \frac{1}{8}\int \!\! \rho(\mathbf{r})^{-1}|\boldsymbol \nabla\rho(\mathbf{r})|^2 \mathrm d\mathrm{r} , \\
  T_\text{p}[\rho,\mathbf{j}_\text{p}] &:= 
  \frac{1}{2} \int \!\! \rho(\mathbf{r})^{-1}|\mathbf{j}_\text{p}(\mathbf{r})|^2 \mathrm d\mathrm{r} . 
\end{align}
A necessary condition for a finite-kinetic-energy representability is therefore that $T_W[\rho] +
T_\text{p}[\rho,\jpvec]<+\infty$. For the case $\jpvec=0$, this is
also a sufficient condition. It is of interest to know whether this sufficiency 
generalizes to $\jpvec\neq 0$. In this paper, we shall prove sufficiency under mild additional
conditions on the current density.

\section{Diagonals of density operators}
\label{secDIAG}


We do not explicitly consider spin and therefore take as our point
of departure an $N$-electron density matrix that depends only on spatial coordinates, with  the spin coordinates integrated out:
\begin{equation}
  \label{eqGAMMAINTRO}
  \Gamma(\mathbf{r}_{1:N},\mathbf{s}_{1:N}) = \sum_i p_i \Psi_i(\mathbf{r}_{1:N}) \Psi^\ast_i(\mathbf{s}_{1:N}).
\end{equation}
Such a density matrix is an element of a
Lebesgue space, 
\begin{equation}
\Gamma \in L^2(\RR^{3N}\times\RR^{3N}), 
\end{equation}
and is symmetric with respect to permutations $\pi$ of the
coordinate labels, $(\mathbf{r}_k,\mathbf{s}_k)
\stackrel{\pi}{\mapsto}
(\mathbf{r}_{\pi(k)},\mathbf{s}_{\pi(k)})$. 
The right-hand side of Eq.~\eqref{eqGAMMAINTRO} is a convex combination of properly normalized pure states, $\Psi_i
\in L^2(\RR^{3N})$, with coefficients $p_i \geq 0$ such that $\sum_i p_i = 1$.
Moreover, each (spin-free) pure state is either totally symmetric or anti-symmetric. 
In what follows, it does not matter whether each pure state is required to be
anti-symmetric, symmetric, or either anti-symmetric or
symmetric with respect to index permutations $\pi$ of the spatial coordinates. For simplicity, we
shall occasionally simplify the presentation by taking $\Gamma$ to be a
pure state.

\subsection{Density matrices}
\label{denmats}

The 1-rdm belonging to a pointwise defined $\Gamma$ on the form~\eqref{eqGAMMAINTRO} is given by the convex combination
\begin{equation}
  D_{\Gamma}(\vec{r},\vec{s}) := N \sum_i p_i  \!\!\int_{\RR^{3N-3}} 
  \!\!\!\!
  \!\!\!\!
  \!\!\!\!
\Psi_i(\vec{r},\vec{r}_{2:N}) \Psi^\ast_i(\vec{s},\vec{r}_{2:N}), 
\mathrm d\vec{r}_{2:N}
  \label{eq:Dpsi}
\end{equation}
and belongs to  $L^2(\RR^3\times\RR^3)$. 
For pure states $\Gamma = \ket{\Psi}
\bra{\Psi}$, we may alternatively write $D_{\Psi}$. Due to permutation
symmetry, the 1-rdm is independent of which $N-1$ coordinates that have been 
integrated out.  

Given that $D_\Gamma \in L^2(\RR^3\times\RR^3)$, $D_\Gamma$ is by 
definition the \emph{kernel of a Hilbert--Schmidt integral operator}.
Our discussion makes extensive use of this basic fact about the reduced
density matrix. In particular, 
$\Gamma$ and $D_\Gamma$ are both \emph{trace-class operators}---that is, Hilbert--Schmidt operators for
which the matrix trace has a meaningful generalization. Let
$\{\phi_k\}\subset L^2(X)$ be an orthonormal basis. By definition, $A$
is a trace-class operator if and only if the trace
\begin{equation}
\tr A := \sum_k \braket{\phi_k, A\phi_k} 
\end{equation}
has a finite value, independent of the orthonormal basis.
 
For two Hilbert--Schmidt operators $B$ and $C$, the kernel of the operator product $A = B \ast C$ is easily seen to be
\begin{equation}
  (B*C)(x,y) := \int_X \!\!
  B(x,z)C(z,y) \,\mathrm d z, \label{eq:star}
\end{equation}
which is also Hilbert--Schmidt. Importantly, 
it can be shown that $A$ is (the kernel of) a trace-class operator if
and only if $A = B\ast C$ with $B$ and $C$ Hilbert--Schmidt. (Indeed,
this is often taken as an alternative definition of trace-class operators.) 
The trace is then given by \cite{BRISLAWN_1988} the integral of the diagonal,
\begin{equation}
  \tr A = \int_{X} (A\ast B)(x,x)\, \mathrm{d} x.
\end{equation}
If $A$ is diagonable (e.g., symmetric positive
semidefinite), then $\tr A$ is the sum of the eigenvalues, like in the
finite-dimensional case. For further information on these operator
classes, see for example the standard textbook~\cite{REED_SIMON_1980}.


We denote by $\mathcal{D}_N$ the set of mixed $N$-electron states
$\Gamma$ and by $\mathcal{D}_{N,1}$ the set of 1-rdms that belong to
some mixed $N$-electron state. 
The set $\mathcal{D}_{N,1}$ has the following
well-known characterization:
\begin{theorem}\label{thm:DN1properties}
  $\mathcal{D}_{N,1}$ consists of those $D\in L^2(\RR^3\times
  \RR^3)$ with the following properties:
  \begin{enumerate}
    \item\label{item:prop1} $D$ is the kernel of a trace-class operator on 
      $L^2(\RR^3)$.
    \item\label{item:prop2}
      $D$ is Hermitian: \newline $D(\vec{r},\vec{s}) = D^\ast(\vec{s},\vec{r})$
      for almost all $(\vec{r},\vec{s})$.
    \item\label{item:prop3}
      $D$ is positive semidefinite: \newline $0 \leq 
\int \! \phi^\ast(\vec{r})
      D(\vec{r},\vec{s})\phi(\vec{s}) \mathrm d\vec{s}$ for all $\phi\in L^2(\RR^3)$
    \item\label{item:prop4}
      $D$ has no eigenvalues greater than two: \newline $2 \geq \int \! \phi^\ast(\vec{r})
      D(\vec{r},\vec{s})\phi(\vec{s}) \mathrm d\vec{s}$ for all $\phi\in
      L^2(\RR^3)$
    \item\label{item:prop5}
      $D$ has eigenvalues that add up to  $N$: \newline$\tr D = N$.
  \end{enumerate}
\end{theorem}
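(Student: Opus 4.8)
The plan is to prove both implications of this characterization, which is essentially Coleman's classical ensemble $N$-representability theorem adapted to the spin-summed spatial 1-rdm, so that occupation numbers run up to $2$ rather than $1$. For necessity, I would start from a mixed state $\Gamma = \sum_i p_i \ket{\Psi_i}\bra{\Psi_i}$ and verify properties 1--5 for $D = D_\Gamma$. Property 1 is the trace-class statement already established in the discussion above. Property 2 follows immediately from the Hermiticity of $\Gamma$ together with the definition of the partial trace in Eq.~\eqref{eq:Dpsi}. For property 3, I would test $D$ against an arbitrary $\phi\in L^2(\RR^3)$ and note that the quadratic form equals $N\sum_i p_i\|c_i\|^2\ge 0$, where $c_i(\vec r_{2:N}) = \int \phi^\ast(\vec r)\Psi_i(\vec r,\vec r_{2:N})\,\mathrm d\vec r$ is the contraction of $\Psi_i$ against $\phi$ in the first coordinate. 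Property 5 is the normalization $\tr D_\Gamma = N\sum_i p_i = N$, which comes from the prefactor $N$ in Eq.~\eqref{eq:Dpsi}. Property 4 is the only one invoking the Pauli principle: in second quantization $\braket{\phi, D_\Gamma\phi} = \tr(\Gamma\,\op{n}_\phi)$, where $\op{n}_\phi$ is the number operator of the spatial mode $\phi$ summed over the two spin channels, with spectrum $\{0,1,2\}$, so the form is bounded above by $2\tr\Gamma = 2$.

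For sufficiency, suppose $D$ satisfies properties 1--5. Being trace-class it is compact, and being Hermitian and positive semidefinite the spectral theorem furnishes an orthonormal system of natural orbitals $\{\phi_k\}\subset L^2(\RR^3)$ with occupation numbers $n_k$ obeying $0\le n_k\le 2$ (by properties 3--4) and $\sum_k n_k = N$ (by property 5). I would then split each $n_k = m_{k\uparrow}+m_{k\downarrow}$ into two numbers in $[0,1]$, for instance $m_{k\uparrow} = \min(n_k,1)$ and $m_{k\downarrow}=n_k-m_{k\uparrow}$, obtaining a sequence of spin-orbital occupation numbers in $[0,1]$ that sums to the integer $N$. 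The task then reduces to the standard fact that any such sequence is a convex combination of $0$--$1$ occupation patterns, each with exactly $N$ occupied spin-orbitals. Each pattern is the natural occupation of a single Slater determinant $\Phi_i$ built from the $\phi_k$ and the two spin functions, so $\Gamma = \sum_i p_i\ket{\Phi_i}\bra{\Phi_i}$ is a genuine mixed $N$-electron state, and tracing out the spin coordinates returns the prescribed $D$.

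The main obstacle is the convex-decomposition step in the sufficiency direction when infinitely many natural orbitals are occupied. In finite dimensions this is a Birkhoff--von Neumann / Carath\'eodory statement, but here one must supply a limiting construction---for example a greedy filling procedure that at each stage peels off an admissible $0$--$1$ pattern with positive weight---and then verify that the resulting countable convex combination converges to a trace-class density operator whose kernel lies in $L^2(\RR^{3N}\times\RR^{3N})$ and reproduces $D$ exactly. Establishing this convergence, together with the measurability of the constructed $\Gamma$, is where the technical care is needed; the remaining verifications are routine.
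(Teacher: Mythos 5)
The first thing to note is that the paper does not prove this theorem at all: its entire ``proof'' is the citation to Parr and Yang, Section 2.6, which in turn rests on Coleman's 1963 ensemble $N$-representability theorem. So your reconstruction is necessarily doing more work than the text it is being compared against, and your route is exactly the classical one from that literature. The necessity half of your argument is complete and correct: Hermiticity and the trace follow from the definition of the partial trace, positivity from writing the quadratic form as $N\sum_i p_i\|c_i\|^2$, and the Pauli bound from $\braket{\phi, D_\Gamma\phi}=\tr(\Gamma\,\op{n}_\phi)$ with the spin-summed mode occupation operator having spectrum $\{0,1,2\}$. The sufficiency outline (spectral decomposition, splitting $n_k=m_{k\uparrow}+m_{k\downarrow}$ into $[0,1]$, convex decomposition into $0$--$1$ patterns realized by Slater determinants) is also the standard strategy, and you are right that the infinite-dimensional convex decomposition is the crux: that step \emph{is} Coleman's theorem. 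One remark on where the difficulty actually sits: the convergence and measurability issues you worry about at the end are in fact the easy part, since any countable convex combination $\sum_i p_i\ket{\Phi_i}\bra{\Phi_i}$ with $\sum_i p_i=1$ automatically converges in trace norm (the tails have trace norm $\sum_{i>M}p_i\to 0$), and trace-class implies Hilbert--Schmidt, hence an $L^2$ kernel. The genuinely nontrivial content is the combinatorial lemma that the greedy peeling exhausts the prescribed occupation sequence, i.e.\ that the weights sum to $1$ and the limiting occupations are the prescribed $n_k$; as written, your proposal asserts rather than establishes this.

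There is also one unflagged gap worth repairing. Tracing out spin from your determinants $\Phi_i$ produces \emph{spatial} pure states that are in general neither totally symmetric nor totally antisymmetric: for $N=2$, an opposite-spin determinant built from $\phi_1,\phi_2$ reduces to an equal-weight mixture of the product states $\phi_1\otimes\phi_2$ and $\phi_2\otimes\phi_1$. The paper's definition of $\mathcal{D}_N$ in Eq.~\eqref{eqGAMMAINTRO} requires each spatial pure state in the convex combination to be totally symmetric or antisymmetric, so to land exactly in that set you need a final recombination step (here, replacing the two products by $2^{-1/2}(\phi_1\otimes\phi_2\pm\phi_2\otimes\phi_1)$ with equal weights, which leaves the operator unchanged), or an explicit appeal to the paper's own remark that the permutation-symmetry requirement on the pure states is immaterial for the reduced quantities. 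This is a small, fixable point, but in a proof whose whole purpose is to characterize membership in $\mathcal{D}_{N,1}$ as defined by the paper, the constructed $\Gamma$ must verifiably lie in the paper's $\mathcal{D}_N$.
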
 
\begin{proof}
  See Ref.~\cite{PARR_YANG_1989}, Section 2.6.
\end{proof}

The last three conditions mean that $D(\mathbf{r},\mathbf{s})$ has
eigenvalues in the interval $[0,2]$---that is, eigenvalues interpretable
as fermion occupation numbers---and that the sum of the eigenvalues
$\tr D$ is equal to $N$, the number of particles.


Since $D\in\mathcal{D}_{N,1}$ is Hermitian and positive, it is
easy to show that there always exists a factorization of the form $D = G^\dag
\ast G$, meaning that we may write the density matrix in the form
\begin{equation}
    D(\vec{r},\vec{s}) = (G^\dag * G)(\vec{r},\vec{s}) =
    \int_{\RR^3} \! G^\ast(\vec{u},\vec{s})
    G(\vec{u},\vec{r}) \mathrm d\vec{u},
\end{equation}
which plays an important role in the following.


\subsection{Density}

We now define the \emph{density} $\rho_\Psi$ associated with the wave function $\Psi$ as
\begin{equation}
  \rho_\Psi(\vec{r}) := D_\Psi(\vec{r},\vec{r}) =  N\!\!\int_{\RR^{3N-3}}  \!\!
  |\Psi(\vec{r},\vec{r}_{2:N})|^2 
\mathrm d\vec{r}_{2:N}.
  \label{eq:rhopsi}
\end{equation}
For almost all $\vec{r}$, it holds that $\Psi(\vec{r},\cdot)\in
L^2(\RR^{3N-3})$. Using the Cauchy--Schwarz inequality, we see from
Eq.\;\eqref{eq:rhopsi} that $\rho_\Psi(\vec{r}) =
D_\Psi(\vec{r},\vec{r})$ is well defined for almost all $\vec{r}$.
For a mixed state $\Gamma \in \mathcal D_N$, the density $\rho_\Gamma$ is defined in the same manner but from $D_\Gamma$.

The following point is subtle but important here.
We write $\Gamma\mapsto D$ whenever 
$\|D_\Gamma-D\|_{L^2(\RR^3\times\RR^3)} = 0$.
This statement does not imply that 
that $D_\Gamma = D$ everywhere, only that $D_\Gamma$ and $D$ are equal as 
elements of $L^2(\RR^3\times \RR^3)$.
Consequently, $D_\Gamma$ and $D$
may differ at a set of measure zero, \emph{including the totality of the diagonal}.
Therefore, we need to examine carefully the validity or meaning of the statement
``$\rho_\Gamma(\vec{r}) = D(\vec{r},\vec{r})$'' for a state and density matrix related by $\Gamma\mapsto D$.

Suppose next that we are able to assign a diagonal $\diag D$ to $D$ in some
unambiguous way and let $\Gamma,\Gamma'\in\mathcal{D}_N$ be two (possibly
distinct) states such that $\Gamma \mapsto D$ and $\Gamma' \mapsto D$, meaning
that $D = D_\Gamma =
D_{\Gamma'}$ almost everywhere in $\mathbb R^3 \times \mathbb R^3$. Is it then true
that $\rho_\Gamma = \rho_{\Gamma'} = \diag D$ almost everywhere in $\mathbb R^3$?
Intuitively, this should be so.

The following theorem, which is proved in Appendix\;B, resolves the issue: 
%
\begin{theorem}\label{thm:densities}
  Let $D\in \mathcal{D}_{N,1}$, and suppose that  $G\in
  L^2(\RR^3\times\RR^3)$ is such that
  \begin{equation*}
    D(\vec{r},\vec{s}) = (G^\dag * G)(\vec{r},\vec{s})
  \end{equation*}
  almost everywhere in $\RR^3\times\RR^3$.
  Then, for every $\Gamma\in \mathcal{D}_N$ such that $\Gamma\mapsto
  D$, it holds that
  \begin{align*}
    \rho_\Gamma(\vec{r}) &=  (G^\dag*G)(\vec{r},\vec{r}) 
  \end{align*}
  almost everywhere in $\RR^3$.
\end{theorem}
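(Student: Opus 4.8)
The strategy is to never evaluate a kernel on the measure-zero diagonal directly, but to characterise the diagonal through its integrals against bounded test functions---equivalently, through traces of $D$ composed with a multiplication operator---and then to exploit that such traces depend only on the operator, not on the kernel representative or the factorization. The starting observation is that the density $\rho_\Gamma$ is itself the diagonal of a Hilbert--Schmidt factorization of $D_\Gamma$. Writing $\mathbf{w}=(i,\mathbf{r}_{2:N})$ and $G_\Gamma(\mathbf{w},\mathbf{r}) := \sqrt{N p_i}\,\Psi_i(\mathbf{r},\mathbf{r}_{2:N})$, one checks from \eqref{eq:Dpsi} that $D_\Gamma = G_\Gamma^\dagger * G_\Gamma$ almost everywhere, and from \eqref{eq:rhopsi} that $(G_\Gamma^\dagger * G_\Gamma)(\mathbf{r},\mathbf{r}) = \int |G_\Gamma(\mathbf{w},\mathbf{r})|^2\,\mathrm d\mathbf{w} = \rho_\Gamma(\mathbf{r})$ for almost every $\mathbf{r}$; moreover $\|G_\Gamma\|_{L^2}^2 = \int \rho_\Gamma = N$, so $G_\Gamma$ is genuinely Hilbert--Schmidt. (Its ``middle'' Hilbert space is $L^2$ of the index--coordinate space rather than $L^2(\RR^3)$, but the factorization trace formula is insensitive to this.) The theorem thereby reduces to the assertion that the two factorizations $G^\dagger * G$ and $G_\Gamma^\dagger * G_\Gamma$ of one and the same $L^2$-kernel yield the same diagonal almost everywhere.

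The key step is the identity $\tr(M_f D) = \int_{\RR^3} f(\mathbf{r})\,(G^\dagger * G)(\mathbf{r},\mathbf{r})\,\mathrm d\mathbf{r}$, valid for every bounded measurable $f$, where $M_f$ denotes multiplication by $f$. Since $M_f$ is bounded and $D$ is trace class, $M_f D$ is trace class, and it admits the Hilbert--Schmidt factorization $M_f D = (M_f G^\dagger) * G$. Applying the trace formula for a factorization into Hilbert--Schmidt operators recalled in Section~\ref{secDIAG} gives $\tr(M_f D) = \int_{\RR^3} \big((M_f G^\dagger) * G\big)(\mathbf{r},\mathbf{r})\,\mathrm d\mathbf{r}$, and the inner convolution integral equals $f(\mathbf{r})\int_{\RR^3} |G(\mathbf{u},\mathbf{r})|^2\,\mathrm d\mathbf{u} = f(\mathbf{r})\,(G^\dagger * G)(\mathbf{r},\mathbf{r})$. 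Running the same computation with $G_\Gamma$ in place of $G$ yields $\tr(M_f D_\Gamma) = \int_{\RR^3} f(\mathbf{r})\,\rho_\Gamma(\mathbf{r})\,\mathrm d\mathbf{r}$.

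Next I would invoke the hypothesis $\Gamma\mapsto D$. Because $D_\Gamma = D$ almost everywhere on $\RR^3\times\RR^3$, the two kernels define the \emph{same} bounded integral operator on $L^2(\RR^3)$; hence $M_f D = M_f D_\Gamma$ as operators and their traces agree. Equating the two trace expressions from the previous step gives $\int_{\RR^3} f\,(G^\dagger * G)(\mathbf{r},\mathbf{r})\,\mathrm d\mathbf{r} = \int_{\RR^3} f\,\rho_\Gamma\,\mathrm d\mathbf{r}$ for every bounded measurable $f$. Both the factorization diagonal and $\rho_\Gamma$ are nonnegative and integrable, each with integral $\tr D = N$ by Theorem~\ref{thm:DN1properties}. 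Taking $f = \chi_E$ for arbitrary measurable $E$ and using that an $L^1$ function is determined almost everywhere by its integrals over all measurable sets, I conclude $\rho_\Gamma(\mathbf{r}) = (G^\dagger * G)(\mathbf{r},\mathbf{r})$ for almost every $\mathbf{r}$.

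The main obstacle I anticipate is the careful justification of the inserted-multiplication trace formula together with the bookkeeping around the nonstandard middle Hilbert space of the state factorization: one must verify that $M_f G^\dagger$ is indeed Hilbert--Schmidt, that the convolution and trace integrals may be interchanged as written (Tonelli on the manifestly nonnegative integrand when $f\ge 0$, then linearity for general $f$), and that the trace formula stated for kernels over $L^2(\RR^3)$ transfers to an abstract separable middle space via a unitary identification. The conceptual point that disarms the measure-zero ambiguity flagged before the theorem is precisely that the diagonal is never sampled pointwise but only tested against bounded functions, so that it is a property of the operator $D$ alone and is thus independent of both the representative kernel and the chosen factorization.
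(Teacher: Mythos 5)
Your proof is correct, and it takes a genuinely different route from the paper's. Both arguments share the opening move: realize $\rho_\Gamma$ itself as the diagonal of a Hilbert--Schmidt factorization of $D_\Gamma$ whose middle space consists of the remaining wave-function coordinates (the paper does this for a pure state, taking $P=\Psi$, $Q=\Psi^\ast$ in its Theorem~\ref{thm:factorization}, and leaves the mixed case to the reader; your indexed kernel $G_\Gamma$ treats the mixed case directly, modulo a harmless conjugation slip --- as written, $G_\Gamma^\dag \ast G_\Gamma = D_\Gamma^\ast$ rather than $D_\Gamma$, so you should set $G_\Gamma(\mathbf{w},\vec{r}) = \sqrt{Np_i}\,\Psi_i^\ast(\vec{r},\vec{r}_{2:N})$; the diagonal $\int |G_\Gamma(\mathbf{w},\vec{r})|^2\,\mathrm{d}\mathbf{w}$ is unaffected). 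The two proofs then diverge on the key lemma that makes the diagonal factorization-independent. The paper invokes Brislawn's regular-representative machinery (Theorem~\ref{thm:factorization}, built on local averaging, the Lebesgue differentiation theorem, the maximal inequality, and dominated convergence): every factorization diagonal equals $\tilde{D}(\vec{r},\vec{r})$ a.e., and $\tilde{D}$ depends only on the $L^2$-class of $D$. You instead characterize the diagonal weakly, via $\tr(M_f D) = \int f(\vec{r})\,(G^\dag \ast G)(\vec{r},\vec{r})\,\mathrm{d}\vec{r}$ for bounded $f$, and combine the operator-invariance of the trace with the fact that an $L^1$ function is determined a.e.\ by its integrals against $L^\infty$. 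This is sound, and in fact more elementary than you suggest: the only nontrivial input is that $\tr(ST)$ for Hilbert--Schmidt $S,T$ equals the double integral of the kernel product (Parseval plus Tonelli/Fubini), a formula valid over any $\sigma$-finite measure space --- which also disposes of your worry about the ``nonstandard middle space'' without any unitary identification, and without needing the Euclidean-domain hypotheses of Theorem~\ref{thm:factorization}; alternatively, linearity of the trace over the convex combination reduces everything to pure states. As for what each approach buys: yours is self-contained and avoids maximal-function theory entirely, while the paper's regular representative is an investment that pays off later, since the same Theorem~\ref{thm:factorization} is reused to define and control the diagonals of $\nabla_1 D$ and $\nabla_1\cdot\nabla_2 D$ in Theorems~\ref{thm:lockin} and~\ref{thm:currents}, where a pointwise (rather than merely weak) notion of diagonal is convenient.
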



Since the factorization of $D = G^\dag\ast G$ does exist following the discussion in Section\;\ref{denmats}, 
it is indeed meaningful to talk about ``the density $\rho$ of $D$'' without reference to a specific
$\Gamma\mapsto D$:
\begin{equation}
  \rho_D(\vec{r}) = \diag D(\vec{r}) := (G^\dag\ast G)(\vec{r},\vec{r})
  \quad \text{a.e.}
\end{equation}
In particular, it follows that
\begin{equation}
\tr D = \int \!\rho_D(\vec{r})\,\mathrm d\vec{r}.
\end{equation}
We emphasize that
we only define the diagonal when a factorization is present, and that this
diagonal is independent of the factorization.

\subsection{Momentum density}

Before considering the momentum density, we note that all derivatives that occur in the subsequent treatment 
are \emph{distributional or weak derivatives}. A function 
$f\in L^p(X)$, with $X\subset\RR^n$ open, is said to have a weak derivative $g = \partial_\alpha f
\in L^1_\loc(X)$ if, for all smooth, compactly supported ``test functions'' $u \in
\mathcal{C}^\infty_\mathrm c(X)$,
\begin{equation}
\int_X g(x)u(x) \mathrm dx = -\int_X f(x) \partial_\alpha u(x) \mathrm dx.
\end{equation} 
Thus, the weak derivative acts just like the standard derivative $\partial f/\partial x_\alpha$ 
when we apply integration by parts, coinciding with the
classical derivative whenever this exists. Higher-order weak derivatives
are defined in a similar manner. A standard monograph for weak
derivatives is Ref.~\cite{EVANS_2000}.

By analogy with the density in Eq.\;\eqref{eq:rhopsi}, we now define 
the \emph{momentum density} $\vec{c}_\Psi$ of a state $\Psi$ as
\begin{align}
  \vec{c}_\Psi(\vec{r}) &:= N \!\! \int_{\RR^{3N-3}} \!\!\!\!\!
  [-\rmi\nabla_{\vec{r}}\Psi(\vec{r},\vec{r}_{2:N})]\Psi^\ast(\vec{r},\vec{r}_{2:N})
\mathrm d\vec{r}_{2:N} 
\nonumber \\ 
&= -\rmi\nabla_\vec{r}D_\Psi(\vec{r},\vec{s})|_{\vec{r}=\vec{s}},
  \label{eq:cpsi}
\end{align}
whose real part is the paramagnetic current density:
\begin{equation}
\jpvec_\Psi(\vec{r}) = \Re \vec{c}_\Psi(\vec{r})
\end{equation}
with an analogous definitions for a mixed state $\Gamma$.
We note, however, that this definition may not make sense without additional
assumptions on the wave function $\Psi$, beyond those needed for the definition of the density.
We also observe that the second equality in Eq.\;\eqref{eq:cpsi}
needs to be justified further since 
$\nabla_\vec{r}D_\Psi(\vec{r},\vec{s})|_{\vec{r}=\vec{s}}$
is only defined pointwise almost everywhere 
and since integration may not commute with differentiation.


To assign unambiguously a momentum density
$\vec{c}_D(\vec{r})$ to $D\in\mathcal{D}_{N,1}$, we first introduce the
notion of a \emph{locally finite kinetic energy}:

\begin{definition}[Locally finite kinetic energy]
We say that $D\in\mathcal{D}_{N,1}$ has a locally finite kinetic
energy if the weak derivative $\boldsymbol \nabla_1\cdot\boldsymbol \nabla_2 D$
is the kernel of a trace class operator over $L^2(K)$ for every compact $K\subset
\RR^3$. Likewise, we say that $\Psi\in
L^2(\RR^{3N})$ has a locally finite kinetic energy if $\boldsymbol \nabla_1\Psi
\in L^2(K\times\RR^{3N-3})$ for every compact $K\subset\RR^3$---that is,
$\boldsymbol \nabla_1\Psi\in L^2(\RR^3_\loc\times\RR^{3N-3})$. (See
Appendix~\ref{secLOCINT}.)
A mixed state $\Gamma\in \mathcal{D}_{N}$ has a locally finite kinetic
energy if $\sum_i p_i
\|\boldsymbol \nabla_1\Psi_i\|^2_{L^2(K\times\RR^{3N-3})}$ is finite for every
compact $K\subset \RR^3$.
\end{definition}

Note that the pure-state definition of locally finite kinetic energy
follows from that of the mixed state. 
The various definitions of a locally finite kinetic energy are connected, as summarized in the
following theorem, proved in Appendix B:
\begin{theorem}\label{thm:lockin}
For $D\in\mathcal{D}_{N,1}$, the following statements are equivalent:
\begin{enumerate}
\item
$D$ has a locally finite kinetic energy.
\item
There exists a factorization $D = G^\dag \! \ast G$ (a.e.) with
$G$ Hilbert--Schmidt and 
$\nabla_2 G\in L^2(\RR^3\times\RR^3_\loc)$.
\item
Any $\Gamma\in\mathcal{D}_{N}$ with $\Gamma\mapsto D$ has
a locally finite kinetic energy.
\end{enumerate}
\end{theorem}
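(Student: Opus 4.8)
The plan is to prove the two equivalences $1\Leftrightarrow 2$ and $1\Leftrightarrow 3$, from which $2\Leftrightarrow 3$ follows automatically. The observation that ties everything together is that the weak kernel $\boldsymbol\nabla_1\cdot\boldsymbol\nabla_2 D$ is \emph{positive}: whenever it arises it will be of the form $H^\dag\ast H$, so that ``trace class over $L^2(K)$'' is equivalent to the finiteness of its trace, computed as the integral over $K$ of its diagonal. Every manipulation of such a diagonal is to be read in the factorized sense of Section~\ref{denmats} and Theorem~\ref{thm:densities}, which is exactly what neutralizes the ambiguity of pointwise values on the null diagonal.

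I would dispatch $1\Leftrightarrow 2$ first. For $2\Rightarrow 1$, differentiate the factorization weakly: with $H:=\boldsymbol\nabla_2 G\in L^2(\RR^3\times\RR^3_{\loc})$ one obtains $\boldsymbol\nabla_1\cdot\boldsymbol\nabla_2 D = H^\dag\ast H$ (dot product over the gradient index), a positive kernel whose trace over $L^2(K)$ equals $\|H\|^2_{L^2(\RR^3\times K)}<\infty$; the interchange of weak differentiation with the $\vec u$-integration is legitimate precisely because $\boldsymbol\nabla_2 G$ is locally square integrable. For $1\Rightarrow 2$, use the spectral decomposition $D=\sum_k\lambda_k\,\phi_k\otimes\phi_k^\ast$ with $\lambda_k>0$ and $\{\phi_k\}$ orthonormal. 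Condition~1 makes $\boldsymbol\nabla_1\cdot\boldsymbol\nabla_2 D$ bounded on each $L^2(K)$, and integrating by parts twice gives $\langle\psi,(\boldsymbol\nabla_1\cdot\boldsymbol\nabla_2 D)\psi\rangle=\sum_k\lambda_k\,|\langle\phi_k,\boldsymbol\nabla\psi\rangle|^2$ for $\psi\in\mathcal{C}^\infty_{\mathrm c}(K)$. Boundedness forces each functional $\psi\mapsto\langle\phi_k,\boldsymbol\nabla\psi\rangle$ to be $L^2(K)$-continuous, so by Riesz representation $\phi_k\in H^1(K)$ for every compact $K$ and the trace equals $\sum_k\lambda_k\|\boldsymbol\nabla\phi_k\|^2_{L^2(K)}<\infty$. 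The factorization built from the spectral data, $G(\vec u,\vec r):=\sum_k\sqrt{\lambda_k}\,\chi_k(\vec u)\phi_k(\vec r)$ with $\{\chi_k\}$ any orthonormal family, then reproduces $D=G^\dag\ast G$ and satisfies $\|\boldsymbol\nabla_2 G\|^2_{L^2(\RR^3\times K)}=\sum_k\lambda_k\|\boldsymbol\nabla\phi_k\|^2_{L^2(K)}<\infty$, which is condition~2.

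The direction $3\Rightarrow 1$ is the comfortable half of the second equivalence. Since $D\in\mathcal{D}_{N,1}$ there is at least one $\Gamma\mapsto D$, and by condition~3 it has $\boldsymbol\nabla_1\Psi_i\in L^2(K\times\RR^{3N-3})$ with $\sum_i p_i\|\boldsymbol\nabla_1\Psi_i\|^2_{L^2(K\times\RR^{3N-3})}<\infty$. This integrability is exactly what permits differentiating Eq.~\eqref{eq:Dpsi} under the $\vec r_{2:N}$-integral, giving the identity $\tr_{L^2(K)}(\boldsymbol\nabla_1\cdot\boldsymbol\nabla_2 D)=N\sum_i p_i\|\boldsymbol\nabla_1\Psi_i\|^2_{L^2(K\times\RR^{3N-3})}<\infty$, so condition~1 holds.

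The hard part will be $1\Rightarrow 3$, and I expect it to be the main obstacle: one must extract $L^2_{\loc}$ control of the weak gradient of \emph{every} representing state from operator-level control of $D$ alone --- the reverse of the situation above, where the regularity was assumed and differentiation under the integral was free. The same identity would settle it, since its left side is finite by condition~1, but it cannot be the starting point because a priori the $\Psi_i$ need not even be weakly differentiable. My plan is to fix a cutoff $\zeta\in\mathcal{C}^\infty_{\mathrm c}$ with $\zeta\equiv 1$ on $K$ and $\zeta\ge 0$, mollify each $\Psi_i$ in its first coordinate so that the identity holds for the now classically differentiable mollified states, and pass to the limit: the right-hand trace converges to the finite value $\tr_{L^2(K)}(\boldsymbol\nabla_1\cdot\boldsymbol\nabla_2 D)$, while lower semicontinuity of the Dirichlet form (Fatou) yields $\sum_i p_i\int\zeta\,|\boldsymbol\nabla_1\Psi_i|^2\le\liminf$ of the mollified energies. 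Since $\int\zeta\,|\boldsymbol\nabla_1\Psi_i|^2\ge\int_K|\boldsymbol\nabla_1\Psi_i|^2$, this bounds $\sum_i p_i\|\boldsymbol\nabla_1\Psi_i\|^2_{L^2(K\times\RR^{3N-3})}$ and gives condition~3. The delicate points --- convergence of the mollified traces and the identification of the diagonal of $\boldsymbol\nabla_1\cdot\boldsymbol\nabla_2 D$ independently of the null set where pointwise values are ambiguous --- are precisely where the Hilbert--Schmidt and diagonal machinery of the appendices is indispensable.
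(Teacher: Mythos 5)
Your proposal is correct, and its skeleton coincides with the paper's own proof: $2\Rightarrow 1$ by weakly differentiating the factorization to obtain the positive kernel $[\nabla_2 G]^\dag\ast[\nabla_2 G]$, which is trace class over $L^2(K)$ because it is positive semidefinite with integrable diagonal; $1\Rightarrow 2$ via the spectral decomposition $D=\sum_k\lambda_k\,\phi_k\otimes\phi_k^\ast$ and the square-root-type factor built from the spectral data; and $3\Leftrightarrow 1$ via the Fubini identity $\tfrac12\partial_{1,\alpha}\partial_{2,\alpha}D=\tfrac12\sum_i p_i\int \partial_{1,\alpha}\Psi_i\,\partial_{1,\alpha}\Psi_i^\ast\,\mathrm d\vec{r}_{2:N}$ combined with Theorem~\ref{thm:factorization}. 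Where you genuinely depart from the paper is in the treatment of a priori regularity, and both departures add rigor. First, in $1\Rightarrow 2$ the paper differentiates the spectral series term by term before it is known that the $\phi_k$ are weakly differentiable; your detour through the quadratic-form identity $\langle\psi,(\nabla_1\cdot\nabla_2 D)\psi\rangle=\sum_k\lambda_k|\langle\phi_k,\nabla\psi\rangle|^2$ on test functions, followed by Riesz representation to conclude $\phi_k\in H^1(K)$, supplies exactly the justification the paper omits. Second, you correctly diagnose that the paper's single ``if and only if'' argument for $3\Leftrightarrow1$ silently presupposes weak differentiability of the $\Psi_i$, which is given in the direction $3\Rightarrow1$ but not in $1\Rightarrow3$; the paper leaves this gap, while your mollification-plus-Fatou plan fills it. The plan is sound: mollifying each $\Psi_i$ in its first coordinate mollifies $D$ in both variables, the localized traces of the mollified kernels are uniformly bounded because $\nabla_1\cdot\nabla_2 D$ is positive and $\|J_\epsilon^\dag\,\zeta\,J_\epsilon\|\leq\|\zeta\|_\infty$ with $J_\epsilon$ the convolution operator, and weak $L^2$ compactness identifies the limit of $\nabla_1\Psi_i^\epsilon$ with the distributional gradient, so lower semicontinuity closes the argument. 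One small correction: the localized mollified traces need not \emph{converge} to $\tr_{L^2(K)}(\nabla_1\cdot\nabla_2 D)$ as you assert; they are only bounded by the trace over a compact neighborhood of the support of $\zeta$, but such a uniform bound is all that Fatou and weak compactness require.
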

If $D$ has a locally finite kinetic energy, then
the associated \emph{kinetic-energy density} is defined as 
\begin{align}
\tau_D(\vec{r}) &:= \frac{1}{2}\|\nabla_2
G(\cdot,\vec{r})\|^2_{L^2(\RR^3)} \notag \\ &=
\frac{1}{2}\int_{\RR^3} [\nabla_2
G(\vec{u},\vec{r})]^*\cdot [\nabla_2G(\vec{u},\vec{r})] \,\mathrm d \vec u\\
&=\frac{1}{2}\diag (\nabla_1\cdot\nabla_2 D)(\vec{r}),\notag
\end{align}
which is finite almost everywhere.
From the proof of the Theorem\;\ref{thm:lockin} in Appendix B, it follows that $\tau_D$ is in fact
the kinetic energy density of any $\Gamma\mapsto D$.  We also see that
$\tau_D\in L^1_\loc(\RR^3)$ and that the total kinetic energy is
finite if and only if $\tau_D\in L^1(\RR^3)$. 

Finally, the following theorem (proved in Appendix B) states that,
if $D$ has a locally finite kinetic energy, then the momentum density of~Eq.\;\eqref{eq:cpsi} is also well defined:
\begin{theorem}\label{thm:currents}
Let $D\in \mathcal{D}_{N,1}$ have a locally finite kinetic energy and
let $G\in L^2(\RR^3\times\RR^3)$ be such that $D = G^\dag*
G$ and $\nabla_2 G \in L^2(\RR^3\times \RR^3_\mathrm{loc})$. 
For each $\Gamma\in \mathcal{D}_N$ with $\Gamma\mapsto D$, it then holds that $\vec{c}_\Gamma \in
L^1_\loc(\RR^3)$ and that
\begin{align*}
\vec{c}_\Gamma(\vec{r}) &= \left([-\rmi\nabla_2 G]^\dag *
G\right)(\vec{r},\vec{r}) \quad\mathrm{a.e.} \notag\\
&= \diag (-\rmi\nabla_1 D)(\vec{r}) .
\end{align*}
\end{theorem}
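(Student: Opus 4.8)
The plan is to follow the template of Theorem~\ref{thm:densities}, replacing the positive diagonal weight by the momentum weight $-\rmi\nabla_2$. The key preliminary observation is that $\vec{c}_\Gamma$ is itself the momentum-weighted diagonal of a factorization of $D$. For a mixed state $\Gamma=\sum_i p_i\ket{\Psi_i}\bra{\Psi_i}\mapsto D$, I would bundle the label $i$ together with the integrated-out coordinates $\vec{r}_{2:N}$ into a single auxiliary variable $\vec{u}=(i,\vec{r}_{2:N})$ ranging over $Y=\ell^2\times\RR^{3N-3}$, and set $\tilde{G}(\vec{u},\vec{r}):=\sqrt{p_i N}\,\Psi_i(\vec{r},\vec{r}_{2:N})$. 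A direct computation gives $\tilde{G}^\dag*\tilde{G}=D$ almost everywhere, and Eq.~\eqref{eq:cpsi} together with linearity in $\Gamma$ yields $\vec{c}_\Gamma(\vec{r})=-\rmi\langle\tilde{G}(\cdot,\vec{r}),\nabla_2\tilde{G}(\cdot,\vec{r})\rangle_{L^2(Y)}$. By the third statement of Theorem~\ref{thm:lockin}, $\Gamma$ has a locally finite kinetic energy, so $\tilde{G}$ satisfies the same local square-integrability of $\nabla_2\tilde{G}$ that $G$ is assumed to satisfy, and is therefore an admissible factorization. The theorem then reduces to two assertions: that the momentum-weighted diagonal $-\rmi\langle G(\cdot,\vec{r}),\nabla_2 G(\cdot,\vec{r})\rangle$ is in $L^1_\loc(\RR^3)$, and that it does not depend on which admissible factorization is used to evaluate it, even when the auxiliary spaces differ.

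Local integrability is the easy half. Pointwise Cauchy--Schwarz in the auxiliary space gives $|{-\rmi}\langle G(\cdot,\vec{r}),\nabla_2 G(\cdot,\vec{r})\rangle|\le\|G(\cdot,\vec{r})\|\,\|\nabla_2 G(\cdot,\vec{r})\|$, and integrating this over a compact $K$ and using Cauchy--Schwarz once more bounds $\int_K$ by $(\int_K\|G(\cdot,\vec{r})\|^2\,\mathrm d\vec{r})^{1/2}(\int_K\|\nabla_2 G(\cdot,\vec{r})\|^2\,\mathrm d\vec{r})^{1/2}$. The first factor equals $(\int_K\rho_D)^{1/2}$ and is finite since $\rho_D\in L^1$, while the second is finite precisely because $\nabla_2 G\in L^2(\RR^3\times\RR^3_\loc)$. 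The identification with $\diag(-\rmi\nabla_1 D)$ is then the statement that differentiating $D=G^\dag*G$ in its first argument may be carried out under the $\vec{u}$-integral, turning $-\rmi\nabla_1 D$ into the star product built from $G$ and $-\rmi\nabla_2 G$; this interchange of weak differentiation and integration---the subtlety flagged after Eq.~\eqref{eq:cpsi}---is justified by the joint (local) square-integrability of $G$ and $\nabla_2 G$.

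The heart of the matter is factorization independence. Two admissible factorizations $G_1,G_2$ of the same $D$ define Hilbert--Schmidt operators $\hat{G}_1,\hat{G}_2$ from $L^2(\RR^3)$ into their auxiliary spaces with $\hat{G}_1^\dag\hat{G}_1=\hat{G}_2^\dag\hat{G}_2=\hat{D}$. Setting $W(\hat{G}_1\phi):=\hat{G}_2\phi$ defines, by the identity $\|\hat{G}_1\phi\|^2=\braket{\phi,\hat{D}\phi}=\|\hat{G}_2\phi\|^2$, a partial isometry with initial space $\overline{\operatorname{ran}\hat{G}_1}$ and $\hat{G}_2=W\hat{G}_1$, so that $W^\dag W=P$ is the orthogonal projection onto $\overline{\operatorname{ran}\hat{G}_1}$. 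Passing to the vector-valued (column) picture of Appendix~A gives $G_2(\cdot,\vec{r})=W\,G_1(\cdot,\vec{r})$ for almost every $\vec{r}$; and since $W$ is bounded and $\vec{r}$-independent it commutes with the difference quotients defining the derivative, so $\nabla_2 G_2(\cdot,\vec{r})=W\,\nabla_2 G_1(\cdot,\vec{r})$ as well. Substituting and using $W^\dag W=P$, $-\rmi\langle G_2(\cdot,\vec{r}),\nabla_2 G_2(\cdot,\vec{r})\rangle=-\rmi\langle G_1(\cdot,\vec{r}),P\,\nabla_2 G_1(\cdot,\vec{r})\rangle$; and because $\nabla_2 G_1(\cdot,\vec{r})$ lies in $\overline{\operatorname{ran}\hat{G}_1}$ for almost every $\vec{r}$---it is an $L^2$-limit of difference quotients of columns, each of which lies in the range---the projection acts as the identity and the two diagonals agree almost everywhere. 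Taking $G_1=\tilde{G}$ and $G_2=G$ gives the stated equality $\vec{c}_\Gamma=-\rmi\langle G(\cdot,\vec{r}),\nabla_2 G(\cdot,\vec{r})\rangle=\diag(-\rmi\nabla_1 D)$.

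The main obstacle is not the algebra, which at the operator level is immediate---the product $[-\rmi\nabla_2 G]^\dag*G$ is, as an operator, determined by $\hat{D}$ and the momentum operator alone, hence independent of the factorization---but the fact that the diagonal is an almost-everywhere pointwise object rather than an operator. The delicate points are therefore (i) representing $\vec{r}\mapsto G(\cdot,\vec{r})$ as an almost-everywhere-defined, weakly $\vec{r}$-differentiable map into the auxiliary Hilbert space, with $\nabla_2 G(\cdot,\vec{r})$ its derivative; (ii) showing that this derivative stays in $\overline{\operatorname{ran}\hat{G}_1}$, so that the projection $P$ may be dropped; and (iii) establishing the almost-everywhere column relation $G_2(\cdot,\vec{r})=W\,G_1(\cdot,\vec{r})$ and its commutation with differentiation. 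Each of these is an avatar of the same underlying issue---the interchange of integration over the auxiliary variable with differentiation in $\vec{r}$---and is exactly what the Hilbert--Schmidt machinery of Appendix~A is designed to control.
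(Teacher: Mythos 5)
Your proposal is correct, and it reaches the crux of the theorem---that the momentum-weighted diagonal does not depend on which admissible factorization of $D$ is used---by a genuinely different mechanism than the paper. The paper's own proof is short: it forms the wavefunction factorization of $D_\Gamma$ (your $\tilde G$, with the mixed-state label absorbed into the auxiliary variable, exactly as you do), notes that both this factorization and $(-\rmi\nabla_2 G)^\dag \ast G$ equal the weak derivative $-\rmi\nabla_1 D$ almost everywhere as kernels, and then applies Theorem~\ref{thm:factorization} of Appendix~A to each: the pointwise diagonal of either factorization coincides a.e.\ with the diagonal of the \emph{regular representative} of the common function $-\rmi\nabla_1 D$, so the two diagonals agree, independently of $\Gamma$ and of $G$. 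The independence thus rests on the Lebesgue differentiation theorem and the Hardy--Littlewood maximal inequality. Your route replaces this harmonic-analysis machinery with operator theory: any two factorizations of $\hat D$ are intertwined by a partial isometry $W$ with $\hat G_2 = W \hat G_1$, columns and their weak derivatives transform by $W$, and the projection $W^\dag W$ can be dropped on the diagonal. (A small simplification is available there: you do not need the harder claim that $\nabla_2 G_1(\cdot,\vec{r})$ lies in $\overline{\operatorname{ran}\hat G_1}$; since $W^\dag W$ is self-adjoint, move it onto the column $G_1(\cdot,\vec{r})$, which lies in the initial space for a.e.\ $\vec{r}$ by testing against a countable dense subset of $(\operatorname{ran}\hat G_1)^\perp$.) The trade-offs are real: the paper reuses machinery it must build anyway for Theorem~\ref{thm:densities}, and comparison across factorizations over \emph{different} auxiliary spaces ($\RR^3$ versus $\mathbb{N}\times\RR^{3N-3}$) is automatic because everything is referred to the same $L^1_\loc$ function $-\rmi\nabla_1 D$; your approach avoids the Lebesgue differentiation theorem entirely for this theorem and exposes the clean structural fact that factorizations are unique up to partial isometry, but it obliges you to supply the vector-valued lemmas you flag as (i)--(iii): the a.e.\ column identity $G_2(\cdot,\vec{r}) = W G_1(\cdot,\vec{r})$ (test against a countable dense set and use separability), and the commutation of $W$ with weak differentiation (pair with test functions, or use $L^2_\loc$ convergence of difference quotients). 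All of these are true and provable along the lines you sketch, and your $L^1_\loc$ estimate is identical to the paper's Cauchy--Schwarz bound, so the proposal stands as a valid alternative proof.
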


This result implies that if $D$ has locally finite kinetic energy,
then $\jpvec_\Gamma\in L^1_\loc(\RR^3)$. Moreover, $\nabla\rho_\Gamma
= -2\Im \vec{c}_\Gamma \in L^1_\loc(\RR^3)$ as well.

\subsection{Summary}

For easy reference, we collect the main  conclusions of this section in a  
separate theorem:

\begin{theorem}\label{thm:summary}
  Let  $D = G^\dag\! \ast G$ with $G\in L^2(\RR^3\times\RR^3) \in \mathcal{D}_{N,1} $, $\nabla_2 G\in
  L^2(\RR^3\times \RR^3_\loc)$. For every
  $\Gamma\in \mathcal{D}_N$ such that $\Gamma\mapsto D$, it then holds that the density
  $\rho_\Gamma = \rho\in L^1(\RR^3)$, the momentum density $\vec{c}_\Gamma = \vec{c} \in L^1_\loc(\RR^3)$,
  and the kinetic energy density $\tau_\Gamma = \tau\in L^1_\loc(\RR^3)$ are given almost everywhere by
  the expressions
  \begin{align*}
    \rho(\vec{r}) &= \int_{\RR^3} G^\ast (\vec{u},\vec{r})
    G(\vec{u},\vec{s}) \,\mathrm d \vec u, \\
    \vec{c}(\vec{r}) &= \int_{\RR^3} [-\rmi\nabla_2
    G^\ast(\vec{u},\vec{r})] \,G(\vec{u},\vec{s})  \,\mathrm d \vec u,\\
    \tau(\vec{r}) &= \frac{1}{2} \int_{\RR^3} [\nabla_2
    G^\ast(\vec{u},\vec{r})] \cdot [\nabla_2 G(\vec{u},\vec{s})]\, \mathrm d \vec u.
  \end{align*}
\end{theorem}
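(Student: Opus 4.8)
The plan is to recognise Theorem~\ref{thm:summary} as a direct synthesis of the three preceding results—Theorems~\ref{thm:densities}, \ref{thm:currents}, and~\ref{thm:lockin}—together with the definition of the kinetic-energy density $\tau_D$. The hypotheses assumed here, namely $D = G^\dag\ast G$ with $G$ Hilbert--Schmidt and $\nabla_2 G\in L^2(\RR^3\times\RR^3_\loc)$, are exactly condition~2 of Theorem~\ref{thm:lockin}. The first step is therefore to invoke that equivalence to conclude that $D$ has a locally finite kinetic energy and, simultaneously, that every $\Gamma\in\mathcal{D}_N$ with $\Gamma\mapsto D$ inherits this property. This places us in the precise setting demanded by the remaining theorems and by the definition of $\tau_D$.

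With well-definedness secured, I would then dispatch the three quantities in turn. The density formula $\rho(\vec{r}) = (G^\dag\ast G)(\vec{r},\vec{r})$ is the verbatim conclusion of Theorem~\ref{thm:densities}—which requires only the factorization, not the kinetic-energy hypothesis—and that theorem already asserts the identity for \emph{every} $\Gamma\mapsto D$, so it simultaneously yields the integral formula and the independence of $\rho_\Gamma$ from the representing state, justifying the common symbol $\rho$. The momentum-density formula $\vec{c}(\vec{r}) = ([-\rmi\nabla_2 G]^\dag\ast G)(\vec{r},\vec{r})$ is likewise the exact content of Theorem~\ref{thm:currents}, applicable precisely because $\nabla_2 G\in L^2(\RR^3\times\RR^3_\loc)$. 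The kinetic-energy-density formula is nothing but the definition of $\tau_D$, and the remark following Theorem~\ref{thm:lockin} records that this $\tau_D$ is the kinetic-energy density of any $\Gamma\mapsto D$, again giving both the formula and its independence of the representing state.

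It remains to confirm the three function-space memberships. For the density I would combine the trace identity $\tr D = \int\rho_D\,\mathrm d\vec{r}$ with property~\ref{item:prop5} of Theorem~\ref{thm:DN1properties}, $\tr D = N$; since $D$ is positive semidefinite its diagonal $\rho_D$ is nonnegative almost everywhere, whence $\rho_D\in L^1(\RR^3)$ with $\|\rho_D\|_{L^1}=N$. The local integrability $\vec{c}\in L^1_\loc(\RR^3)$ is stated directly in Theorem~\ref{thm:currents}, and $\tau\in L^1_\loc(\RR^3)$ is recorded in the discussion immediately following Theorem~\ref{thm:lockin} (which also notes that global integrability of $\tau$ is equivalent to finiteness of the total kinetic energy).

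Because all of the genuine analytic labour—commuting integration with differentiation, passing to the diagonal, and establishing independence from the choice of $\Gamma$—has already been carried out in the appendix proofs of the cited theorems, the only real step here is bookkeeping: confirming that the factorization alone suffices for Theorem~\ref{thm:densities}, while the single additional hypothesis $\nabla_2 G\in L^2(\RR^3\times\RR^3_\loc)$ is simultaneously the trigger for Theorem~\ref{thm:currents} and, through the equivalence in Theorem~\ref{thm:lockin}, for the locally-finite-kinetic-energy property that licenses $\tau_D$. I expect no substantive obstacle; the result is a corollary whose purpose is to collect the three diagonal formulas, and their independence from the representing state, in a single citable place.
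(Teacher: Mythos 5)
Your proof is correct and takes essentially the same route as the paper: the paper states Theorem~\ref{thm:summary} without a separate proof, presenting it explicitly as a collection ``for easy reference'' of Theorems~\ref{thm:densities}, \ref{thm:lockin}, and~\ref{thm:currents} together with the definition of $\tau_D$, which is precisely the synthesis you carried out. Your bookkeeping of the memberships $\rho\in L^1(\RR^3)$, $\vec{c}\in L^1_\loc(\RR^3)$, and $\tau\in L^1_\loc(\RR^3)$ likewise matches the paper's remarks following those theorems (the trace identity $\tr D=\int\rho_D\,\mathrm d\vec{r}$ with $\tr D=N$, and the local-integrability statements attached to Theorems~\ref{thm:lockin} and~\ref{thm:currents}).
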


\section{A reduced density matrix for a prescribed 
  density and paramagnetic current density}
\label{redmat}

Let a density $\rho$ be given. We assume that the density is
non-negative and that it belongs to the intersection of two Lebesgue
spaces,
\begin{equation}
\rho(\mathbf{r}) \geq 0, \quad \text{and} \quad \rho \in L^1(\mathbb{R}^3) \cap L^q(\mathbb{R}^3),
\end{equation}
for some $q > 1$. The latter condition amounts to
\begin{align}
  N := \|\rho\|_1 &= \int \! |\rho(\mathbf{r})| \,\mathrm d\mathbf{r} < +\infty, \\
  \|\rho\|_q^q &= \int \! |\rho(\mathbf{r})|^q \,\mathrm d\mathbf{r} < +\infty,
\end{align}
where $N$ is the number of particles in the density $\rho$. (For simplicity, we restrict ourselves to states with integral $N$ but note
the 1-rdm constructions given below are valid also for fractional $N$.) Furthermore, let an arbitrary measurable vector-valued function
$\boldsymbol{\kappa}:\RR^3\rightarrow\RR^3$ be given and let it
prescribe a paramagnetic current density by the relation
\begin{equation}
 \jpvec(\mathbf{r}) = \frac{1}{2} \rho(\mathbf{r}) \boldsymbol{\kappa}(\mathbf{r}).
\end{equation}
We now consider the question:
does there, for every pair of $\rho$ and $\jpvec$ satisfying these
minimal requirements, exist a $D\in\mathcal{D}_{N,1}$ that reproduces $\rho$ and
$\jpvec$? In short, we seek a reduced density matrix $D$ such that
\begin{enumerate}
\item[(a)]\label{itema} $\rho(\mathbf{r}) = \rho_D(\vec{r}) = (\diag D)(\mathbf{r})$
\item[(b)]\label{itemb} $\jpvec(\mathbf{r}) = \Re\vec{c}_D(\vec{r}) =
  -\frac{\mathrm i}{2} (\diag \nabla_1 D)(\vec{r}) + \text{c.c.}$, 
\end{enumerate}
assuming that $D$ has a locally finite kinetic energy for 
(b) to be well defined.  We can indeed find such a density matrix $D\in\mathcal{D}_{N,1}$ 
but shall see that
the condition of a locally finite kinetic energy of $D$
implies mild additional conditions on $\rho$ and $\boldsymbol{\kappa}$.

\subsection{Factorized elements $P_\lambda$ and $Q_\lambda$}

Our strategy is to construct explicitly factorized elements
$P_\lambda=G^\dag_\lambda\!\ast G_\lambda$ and $Q_\mu=H^\dag_\mu\ast
H_\mu$ in $\mathcal{D}_{N,1}$ with a locally finite kinetic
energy. Here, $\lambda,\mu>0$ are real parameters that allow some freedom, noting that a
convex combination $D_{\lambda\mu} = (P_\lambda+Q_\mu)/2$ remains in $\mathcal{D}_{N,1}$,
also with a locally finite kinetic energy. The flexibility of having
several independent factorized reduced density matrices $P_\lambda$
and $Q_\mu$ allows the convex combination to reproduce the desired
current.



The two terms are defined by the factorized expressions
  \begin{align}
  \label{eq:PQ1}
    P_{\lambda}(\mathbf{r},\mathbf{s}) & = \sqrt{\rho(\mathbf{r}) \rho(\mathbf{s})} \int_{\RR^3} \! g^\ast(\mathbf{u},\mathbf{r}) g(\mathbf{u},\mathbf{s}) \,\mathrm d\mathbf{u}, \\
  \label{eq:PQ2}
    Q_{\mu}(\mathbf{r},\mathbf{s}) & = \sqrt{\rho(\mathbf{r}) \rho(\mathbf{s})} \int_{\RR^3} \! h^\ast(\mathbf{u},\mathbf{r}) h(\mathbf{u},\mathbf{s}) \,\mathrm d\mathbf{u},
  \end{align}
where 
  \begin{align}
  \label{eq:gh1}
    g(\mathbf{u},\mathbf{v}) & = \frac{\sqrt{8} \lambda^{3/4}}{\pi^{3/4}} \mathrm e^{-\mathrm i\mathbf{v}\cdot\boldsymbol{\kappa}(\mathbf{v})} \mathrm e^{-2\lambda (\mathbf{u}-\mathbf{v})^2},   \\
  \label{eq:gh2}
    h(\mathbf{u},\mathbf{v}) & = \frac{\sqrt{8} \mu^{3/4}}{\pi^{3/4}} \mathrm e^{\mathrm i\mathbf{u}\cdot\boldsymbol{\kappa}(\mathbf{v})} \mathrm e^{-2\mu (\mathbf{u}-\mathbf{v})^2}.
  \end{align}
Clearly, these operators may be written in the form
  \begin{alignat}{2}
  \label{eq:GH1}
P_\lambda &= G_\lambda^\dag \!\ast G_\lambda, &\quad
    G_\lambda(\vec{r},\vec{s}) &= g(\vec{r},\vec{s})\sqrt{\rho(\vec{s})}, \\
  \label{eq:GH2}
Q_\mu &= H_\mu^\dag \ast H_\mu &
    H_\mu(\vec{r},\vec{s}) &= h(\vec{r},\vec{s})\sqrt{\rho(\vec{s})},
  \end{alignat}
It is straightforward to verify that $G_\lambda,H_\lambda\in L^2(\RR^3\times\RR^3)$.

The integration over $\mathbf{u}$ may be performed analytically,
yielding the alternative expressions
  \begin{align}
    P_{\lambda}(\mathbf{r},\mathbf{s}) & = \sqrt{\rho(\mathbf{r}) \rho(\mathbf{s})} \mathrm e^{-\lambda |\mathbf{r}-\mathbf{s}|^2} \mathrm e^{\mathrm i(\mathbf{r}\cdot\boldsymbol{\kappa}(\mathbf{r})-\mathbf{s}\cdot\boldsymbol{\kappa}(\mathbf{s}))}, \label{eq:expanded1}\\
    Q_{\mu}(\mathbf{r},\mathbf{s}) & = \sqrt{\rho(\mathbf{r})
      \rho(\mathbf{s})} \mathrm e^{-\mu|\mathbf{r}-\mathbf{s}|^2} \notag\\ &\quad
    \times \mathrm e^{-\tfrac{\mathrm i}{2} (\mathbf{r}+\mathbf{s})\cdot(\boldsymbol{\kappa}(\mathbf{r})-\boldsymbol{\kappa}(\mathbf{s}))-|\boldsymbol{\kappa}(\mathbf{r})-\boldsymbol{\kappa}(\mathbf{s})|^2/16\mu}.\label{eq:expanded2}
  \end{align}
These operators were found by making the initial ansatz 
$\phi(\mathbf{r}) = \sqrt{\rho(\mathbf{r})}
\mathrm e^{\mathrm i\mathbf{r}\cdot\boldsymbol{\kappa}(\mathbf{r})}$ for an
unnormalized natural orbital. The corresponding paramagnetic current
is then almost correct but contains an extra term that is most easily
canceled if the density matrix contains exponential factors of the
form $\mathrm e^{\mathrm i\mathbf{r}\cdot\boldsymbol{\kappa}(\mathbf{s})}$. Since
the elements of $\mathcal{D}_{N,1}$ and their properties are conveniently
described if an explicit factorization is available (see
Theorem~\ref{thm:summary}), Gaussian kernels are suitable since since
they allow mixed phase factors of the type
$\mathrm e^{\mathrm i\mathbf{r}\cdot\boldsymbol{\kappa}(\mathbf{s})}$ to survive the
integration.

\subsection{The density of $P_\lambda$ and $Q_\lambda$}

We now need to verify that $P_\lambda$ and $Q_\mu$ are elements of
$\mathcal{D}_{N,1}$ by checking points (1)--(5) of Theorem~\ref{thm:DN1properties}.

\begin{theorem}
  Let $\rho\in L^1(\RR^3)\cap L^q(\RR^3)$ for some $q > 1$, $\rho\geq
  0$ a.e., $\|\rho\|_1=N$, and let
  $\lambda,\mu\in\RR$ be such that
  \begin{equation*}
    \label{eqLAMBDAMUCOND}
    \lambda,\mu \geq \frac{2p}{\pi} ( \tfrac{1}{4} N \|\rho\|_q )^{2p/3} 
  \end{equation*}
  where $1/p + 1/q = 1$. 
  Then $P_\lambda$ and $Q_\mu$ in Eqs.~\eqref{eq:PQ1}--\eqref{eq:gh2} are
  elements of $\mathcal{D}_{N,1}$, with
  \begin{equation*}
    \rho_{P_\lambda}(\vec{r}) = \rho_{Q_\mu}(\vec{r}) = \rho(\vec{r})
  \end{equation*}
almost everywhere.
The same is true for any convex
  combination $\theta P_\lambda + (1-\theta)Q_\mu \in
  \mathcal{D}_{N,1}$ with $\theta\in[0,1]$.
\end{theorem}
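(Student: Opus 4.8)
The plan is to verify, for $P_\lambda$ (the argument for $Q_\mu$ being word-for-word identical), the five defining properties of $\mathcal{D}_{N,1}$ listed in Theorem~\ref{thm:DN1properties}, then the density identity, and finally to check that everything survives the convex combination. Two of the properties are essentially free from the factorized form: writing $P_\lambda = G_\lambda^\dag \ast G_\lambda$ as in Eq.~\eqref{eq:GH1}, Hermiticity (property~2) and positive semidefiniteness (property~3) follow at once, since $\int \phi^\ast(\vec r)\,P_\lambda(\vec r,\vec s)\,\phi(\vec s)\,\mathrm d\vec r\,\mathrm d\vec s = \|G_\lambda\phi\|_{L^2}^2 \geq 0$ for every $\phi\in L^2(\RR^3)$. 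Property~1 (trace-class) follows because $G_\lambda\in L^2(\RR^3\times\RR^3)$ is Hilbert--Schmidt and a product of two Hilbert--Schmidt operators is trace-class, as recalled in Section~\ref{denmats}.

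For the density, and hence property~5, I would compute the diagonal \emph{through the factorization}, as mandated by Theorem~\ref{thm:densities} and collected in Theorem~\ref{thm:summary}, rather than by naively setting $\vec r = \vec s$ in the $L^2$ kernel. This gives $\rho_{P_\lambda}(\vec r) = (G_\lambda^\dag\ast G_\lambda)(\vec r,\vec r) = \rho(\vec r)\int_{\RR^3} |g(\vec u,\vec r)|^2\,\mathrm d\vec u$. Since the phase factor in Eq.~\eqref{eq:gh1} has unit modulus, $|g(\vec u,\vec r)|^2 = (8\lambda^{3/2}/\pi^{3/2})\,\mathrm e^{-4\lambda(\vec u-\vec r)^2}$, and the Gaussian integrates to exactly one — which is precisely the purpose of the prefactor $\sqrt{8}\lambda^{3/4}/\pi^{3/4}$. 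Thus $\rho_{P_\lambda} = \rho$ almost everywhere, and integrating yields $\tr P_\lambda = \int\rho\,\mathrm d\vec r = N$ (property~5). The same normalization also confirms $\|G_\lambda\|_{\mathrm{HS}}^2 = N < \infty$.

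The heart of the matter is property~4, the bound that no eigenvalue exceeds two; this is the only place the hypotheses $\rho\in L^q$ and the lower bound on $\lambda,\mu$ enter. I would identify the top eigenvalue of $P_\lambda = G_\lambda^\dag\ast G_\lambda$ with $\|G_\lambda\|_{\mathrm{op}}^2 = \sup_{\|\phi\|_2=1}\|G_\lambda\phi\|_2^2$ and estimate it directly. Writing out $G_\lambda\phi$ exhibits it as a Gaussian convolution of $\psi := \mathrm e^{-\rmi\,\vec s\cdot\boldsymbol{\kappa}}\sqrt{\rho}\,\phi$, so that $\|G_\lambda\phi\|_2^2 = (8\lambda^{3/2}/\pi^{3/2})\,\|K_\lambda\ast\psi\|_2^2$ with $K_\lambda(\vec x) = \mathrm e^{-2\lambda\vec x^2}$. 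Applying Young's convolution inequality $\|K_\lambda\ast\psi\|_2 \leq \|K_\lambda\|_t\|\psi\|_b$ and then Hölder's inequality to separate $\sqrt{\rho}$ from $\phi$ (with the exponents tied to the conjugate pair $p,q$, e.g.\ $1/t = 1 - 1/(2q)$) reduces everything to explicit Gaussian integrals. The decisive structural feature is the scaling: the resulting prefactor behaves like $\lambda^{-3/(2p)}$ times $\|\rho\|_q$, so increasing $\lambda$ drives the top eigenvalue down, and it falls below $2$ for $\lambda$ past a threshold of exactly the stated form $\lambda\gtrsim\|\rho\|_q^{2p/3}$. I expect the main obstacle to be purely bookkeeping: assembling the constants from the Gaussian integrals and the particular balancing of the $L^1$ norm $N$ and the $L^q$ norm of $\rho$ into the precise numerical bound $\tfrac{2p}{\pi}(\tfrac14 N\|\rho\|_q)^{2p/3}$. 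Everything conceptual is contained in the convolution estimate and its $\lambda$-scaling.

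Finally, for the convex combination $D_\theta = \theta P_\lambda + (1-\theta)Q_\mu$: Hermiticity, positive semidefiniteness, and the trace-class property are preserved under positive linear combinations; the trace is linear, so $\tr D_\theta = N$; and the diagonal is linear (Theorem~\ref{thm:summary}), so $\rho_{D_\theta} = \rho$. The eigenvalue bound survives not because eigenvalues are linear but because the \emph{upper} bound is convex: $\int\phi^\ast D_\theta\phi = \theta\int\phi^\ast P_\lambda\phi + (1-\theta)\int\phi^\ast Q_\mu\phi \leq 2\|\phi\|_2^2$. Hence $D_\theta\in\mathcal{D}_{N,1}$, completing the argument.
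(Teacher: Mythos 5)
Your treatment of properties 1--3 and 5, the density identity, and the convex combination is correct and is essentially the paper's own argument (the paper reads the diagonal off the analytically integrated kernels~\eqref{eq:expanded1}--\eqref{eq:expanded2}, which is the same Gaussian-normalization computation you do through the factorization). The genuine gap is in property 4. With the exponents you fix --- Young's inequality $\|K_\lambda\ast\psi\|_2\le\|K_\lambda\|_t\|\psi\|_b$ followed by H\"older with \emph{both} factors of $\sqrt\rho$ measured in $L^{2q}$, i.e.\ $\|\sqrt\rho\,\phi\|_b\le\|\sqrt\rho\|_{2q}\|\phi\|_2$, $1/t=1-1/(2q)$ --- the estimate you get is
\begin{equation*}
\|G_\lambda\|_{\mathrm{op}}^2 \;\le\; C(p)\,\|\rho\|_q\,\lambda^{-3/(2p)},
\end{equation*}
with \emph{no dependence on $N=\|\rho\|_1$ at all}: the $L^1$ norm of $\rho$ never enters your chain of inequalities. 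The induced sufficient condition is $\lambda\gtrsim\|\rho\|_q^{2p/3}$, which is a different functional form from the stated threshold $\tfrac{2p}{\pi}(\tfrac14 N\|\rho\|_q)^{2p/3}$; this is not bookkeeping, because no reshuffling of Gaussian constants can create the missing factor $N^{2p/3}$. Since the theorem asserts $P_\lambda\in\mathcal{D}_{N,1}$ for \emph{every} $\lambda$ above the stated threshold, and for small $N$ the stated threshold lies strictly below what your estimate can certify, your argument leaves part of the stated range uncovered. Concretely, for $p=q=2$ and $N=1$ your bound (with the standard Young constant, $t=4/3$) reads $\|G_\lambda\|_{\mathrm{op}}^2\le 8\pi^{-3/2}(3\pi/8)^{9/4}\,\|\rho\|_2\,\lambda^{-3/4}\approx 2.08\,\|\rho\|_2\,\lambda^{-3/4}$; evaluated at the stated threshold $\lambda=\tfrac{4}{\pi}(\tfrac14\|\rho\|_2)^{4/3}$ this gives $\approx 7$, which certifies nothing about the required bound of $2$.

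The repair is instructive because you and the paper are actually bounding the identical quantity: the Gaussian self-convolution gives $\|G_\lambda\phi\|_2^2=\int\!\!\int \psi^\ast(\vec r)\,\mathrm e^{-\lambda|\vec r-\vec s|^2}\psi(\vec s)\,\mathrm d\vec r\,\mathrm d\vec s$ with $\psi=\mathrm e^{-\rmi\,\vec s\cdot\boldsymbol{\kappa}}\sqrt\rho\,\phi$, i.e.\ exactly the quadratic form of the kernel~\eqref{eq:expanded1}. The paper estimates it by Cauchy--Schwarz \emph{twice} instead of Young: the first application peels off $\|\phi\|_2^2$ and leaves one factor of $\rho$ under an $\vec r$-integral (this is where $N$ is born), the second doubles the Gaussian exponent $\lambda\to 2\lambda$, and H\"older is applied only to the single remaining integral $\int\rho(\vec s)\,\mathrm e^{-2\lambda|\vec r-\vec s|^2}\mathrm d\vec s$, producing $\|\rho\|_q$. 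The result, (eigenvalue)$^2\le N\|\rho\|_q(\pi/(2p\lambda))^{3/(2p)}$, has the correct $(N\|\rho\|_q)^{2p/3}$ threshold form. If you prefer to stay inside your convolution framework, replace your H\"older step by interpolation between $L^2$ and $L^{2q}$: with $r=4q/(q+1)$ one has $\|\sqrt\rho\|_{r}^2\le\|\sqrt\rho\|_2\|\sqrt\rho\|_{2q}=(N\|\rho\|_q)^{1/2}$, which restores both the $\lambda^{-3/(4p)}$ scaling and the $(N\|\rho\|_q)^{1/2}$ prefactor --- but you must then still check that your constants are small enough to cover the full stated $\lambda$-range, which is exactly the step that cannot be waved away as bookkeeping.
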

\begin{proof}
  Both operators are Hermitian and positive semi{\-}definite. 
  From the expressions in Eqs.\;\eqref{eq:expanded1} and \eqref{eq:expanded2},
  $(\diag P_\lambda)(\vec{r}) = (\diag Q_\lambda)(\vec{r}) = \rho(\vec{r})$ almost everywhere. 
  It follows that $\tr P_\lambda = \tr Q_\lambda
  = \int \! \rho(\vec{r}) \mathrm d\vec{r} = N$. 

It remains to compute a
  bound on the largest eigenvalues, demonstrating
  point (4) of Theorem~\ref{thm:DN1properties} for the corresponding
  parameter values $\lambda$ and $\mu$.
For an arbitrary normalized orbital,
\begin{align}
  n^2 & \leq \left| \int \!\!\phi^\ast(\mathbf{r}) P_{\lambda}(\mathbf{r},\mathbf{s}) \phi(\mathbf{s}) \,\mathrm d\mathbf{r} \mathrm d\mathbf{s} \right|^2 \notag\\
   & \leq \left( \int \! |\phi(\mathbf{r})| \sqrt{\rho(\mathbf{r}) \rho(\mathbf{s})} \mathrm e^{-\lambda |\mathbf{r}-\mathbf{s}|^2} |\phi(\mathbf{s})| \, \mathrm d\mathbf{r} \, \mathrm d\mathbf{s} \right)^2  \\  
   & = \left( \int \! |\phi(\mathbf{r})| \sqrt{\rho(\mathbf{r})} \left( \int \! \sqrt{\rho(\mathbf{s})} \mathrm e^{-\lambda |\mathbf{r}-\mathbf{s}|^2} |\phi(\mathbf{s})| \mathrm d\mathbf{s} \right) \! \mathrm d\mathbf{r} \right)^2.\notag
\end{align}
Given that $\phi, \sqrt{\rho} \in L^2(\mathbb{R}^3)$, the
Cauchy--Schwarz inequality may be applied twice to give
\begin{align}
  n^2 & \leq \left( \int \! |\phi(\mathbf{r}')|^2 \mathrm d\mathbf{r}' \right) \times \nonumber 
\\ &\quad \times \left( \int \! \rho(\mathbf{r}) \left( \int \!\!  \sqrt{\rho(\mathbf{s})} \mathrm e^{-\lambda |\mathbf{r}-\mathbf{s}|^2} |\phi(\mathbf{s})| \mathrm d\mathbf{s} \right)^2 \! \mathrm d\mathbf{r} \! \right) 
\nonumber \\
   & \leq \int \rho(\mathbf{r}) \left( \int |\phi(\mathbf{s}')|^2 \mathrm d\mathbf{s}' \int \!\rho(\mathbf{s}) \mathrm e^{-2\lambda |\mathbf{r}-\mathbf{s}|^2} \mathrm d\mathbf{s} \right) \! \mathrm d\mathbf{r} 
\nonumber \\
   & \leq \int \! \rho(\mathbf{r}) \left( \sup_{\mathbf{c}} \int \! \rho(\mathbf{s}) \mathrm e^{-2\lambda |\mathbf{c}-\mathbf{s}|^2} \mathrm d\mathbf{s} \right) \mathrm d\mathbf{r} 
\nonumber \\
   & = N \sup_{\mathbf{c}} \int \! \rho(\mathbf{s}) \mathrm e^{-2\lambda |\mathbf{c}-\mathbf{s}|^2} \mathrm d\mathbf{s}
\end{align}
Finally, exploiting the fact that $\rho \in L^q(\mathbb{R}^3)$, the integral over
$\mathbf{s}$ may be bounded by invoking the H\"older inequality,
\begin{equation}
\!\!   n^2  \leq N \|\rho\|_q \sup_{\mathbf{c}} \|\mathrm e^{-2\lambda |\mathbf{c}-\mathbf{s}|^2}\|_p = N \|\rho\|_q \left( \frac{\pi}{2p\lambda} \right)^{3/2p}
\end{equation}
where $1/p + 1/q = 1$.
This bound is independent of the current density. Hence, $P_{\lambda}$
has no eigenvalues greater than two if
\begin{equation}
 \begin{split}
   \label{eqLAMBDACONDITION}
   \lambda \geq \frac{2p}{\pi} \left(\tfrac{1}{4} N \|\rho\|_q \right)^{2p/3}.
 \end{split}
\end{equation}
These steps hold also for $Q_{\mu}$, showing that it has
no eigenvalues greater than 2 when $\mu \geq \frac{2p}{\pi}
(\tfrac{1}{4} N \|\rho\|_q )^{2p/3}$.

Finally, consider a convex combination $D_\theta = \theta P_\lambda +
(1-\theta)Q_\mu$, which belongs to $\mathcal{D}_{N,1}$ since this set convex. Moreover,
$\diag A$ is linear in $A$ since $\diag(A+B)(\vec{r}) =
\diag(A)(\vec{r}) + \diag(B)(\vec{r})$ almost everywhere. Therefore, $\diag D_\theta =
\theta \diag P_\lambda + (1-\theta)\diag Q_\lambda = \rho$ almost
everywhere.
\end{proof}

\ 
\subsection{The canonical kinetic energy of $P_\lambda$ and $Q_\lambda$}
\label{cankin}

We now turn to the question of whether the current $\jpvec$ can be
reproduced by $D$. Indeed, a formal calculation shows that
\begin{align}
  -\frac{\mathrm i}{2} \frac{\partial}{\partial r_{\alpha}}
 & P_{\lambda}(\mathbf{r},\mathbf{s})) \big|_{\mathbf{s}=\mathbf{r}} +
  \text{c.c.}  = \rho(\mathbf{r}) \!\left( \!\kappa_{\alpha}(\mathbf{r})
    + \mathbf{r}\cdot \frac{\partial
      \boldsymbol{\kappa}(\mathbf{r})}{\partial r_{\alpha}} \right)
  \notag \\&= 2\jpcomp{\alpha}(\mathbf{r}) + \rho(\mathbf{r})\
  \mathbf{r}\cdot \frac{\partial
    \boldsymbol{\kappa}(\mathbf{r})}{\partial r_{\alpha}} 
\end{align}
and
\begin{align}
  -\frac{\mathrm i}{2} \frac{\partial}{\partial r_{\alpha}}
  Q_{\mu}(\mathbf{r},\mathbf{s})) \big|_{\mathbf{s}=\mathbf{r}} +
  \text{c.c.} & = -\rho(\mathbf{r}) \ \mathbf{r}\cdot\frac{\partial
    \boldsymbol{\kappa}(\mathbf{r})}{\partial r_{\alpha}}. \label{eq:formalcalculation}
\end{align}
Thus, we expect $\jpvec_{D_{\lambda\mu}} =
\frac{1}{2}\jpvec_{P_\lambda}(\vec{r}) +
\frac{1}{2}\jpvec_{Q_\mu}(\vec{r}) = \jpvec(\vec{r})$ to hold almost everywhere. 
To prove this result, it suffices to find conditions on
$\rho$ and $\kappa$ such that $P_\lambda$ and $Q_\mu$ have a locally
finite kinetic energy.
%

The kinetic energy density of $P_\lambda$ is
\begin{equation}
 \begin{split}
  \tau_P(\mathbf{r}) & =
  \frac{1}{2}\|\nabla_2 G_\lambda(\cdot,\vec{r})\|_{L^2(\RR^3)}^2 = \frac{1}{2} \nabla_{\mathbf{r}}\cdot \nabla_{\mathbf{s}} P_{\lambda}(\mathbf{r},\mathbf{s}) \big|_{\mathbf{s}=\mathbf{r}} \\
   &  = \frac{|\nabla \rho(\mathbf{r})|^2}{8 \rho(\mathbf{r})} + \frac{1}{2} |\nabla (\mathbf{r}\cdot\boldsymbol{\kappa}(\mathbf{r}))|^2 \rho(\mathbf{r}) + \lambda \rho(\mathbf{r}).
 \end{split}
\end{equation}
Here, we have used the fact that the integral in $\|\nabla_2
G(\cdot,\vec{r})\|^2_{L^2(\RR^3)}$ can be performed analytically, so
that the evaluation at $\vec{s}=\vec{r}$ after the second equality is
in fact well-defined.
Similarly, 
\begin{widetext}
\begin{equation}
  \tau_Q(\mathbf{r})  = \frac{1}{2} \nabla_{\mathbf{r}}\cdot \nabla_{\mathbf{s}} Q_{\mu}(\mathbf{r},\mathbf{s}) \big|_{\mathbf{s}=\mathbf{r}}
     = \frac{|\nabla \rho(\mathbf{r})|^2}{8 \rho(\mathbf{r})} \nonumber + 
     \frac{1}{2} \left( \sum_{\alpha=1}^3 \left( \sum_{\beta=1}^3
         r_{\beta} \frac{\partial \kappa_{\beta}(\mathbf{r})}{\partial
           r_{\alpha}} \right)^2 + \tfrac{1}{8\mu} \sum_{\beta=1}^3
       |\nabla \kappa_{\beta}(\mathbf{r})|^2 \right) \rho(\mathbf{r})
     + \mu \rho(\mathbf{r}). 
\end{equation}
The total kinetic energy density becomes 
\begin{equation}
 \begin{split}
  \tau_D(\mathbf{r}) & = \frac{|\nabla \rho(\mathbf{r})|^2}{8 \rho(\mathbf{r})} + \frac{1}{4} \sum_{\alpha=1}^3 \left[ \left(\frac{\partial}{\partial r_{\alpha}} \mathbf{r}\cdot\boldsymbol{\kappa}(\mathbf{r})\right)^2 + \left( \mathbf{r}\cdot \frac{\partial \boldsymbol{\kappa}(\mathbf{r})}{\partial r_{\alpha}} \right)^2 + \tfrac{1}{8\mu} \sum_{\beta=1}^3 \left( \frac{\partial \kappa_{\beta}(\mathbf{r})}{\partial r_{\alpha}} \right)^2 \right] \rho(\mathbf{r}) 
   + \frac{1}{2} (\lambda+\mu) \rho(\mathbf{r}),
 \end{split}
\end{equation}
\end{widetext}
where we have used the fact that the kinetic energy density is linear
in the density matrix.
Using the special form $2|ab|\leq a^2+b^2$ of Young's inequality, with
$a = \kappa_{\alpha}(\mathbf{r})$ and $b=\mathbf{r} \cdot \partial \boldsymbol{\kappa}(\mathbf{r})/\partial r_{\alpha}$, we find that
\begin{align}
  \left| \frac{\partial}{\partial r_{\alpha}}
    \mathbf{r}\cdot\boldsymbol{\kappa}(\mathbf{r}) \right|^2 & =
  \left|\kappa_{\alpha}(\mathbf{r}) + \mathbf{r} \cdot \frac{\partial
      \boldsymbol{\kappa}(\mathbf{r})}{\partial r_{\alpha}} \right|^2 
\notag \\ & \leq 2 |\kappa_{\alpha}(\mathbf{r})|^2 + 2 \left|\mathbf{r} \cdot \frac{\partial \boldsymbol{\kappa}(\mathbf{r})}{\partial r_{\alpha}} \right|^2.
\end{align}
Hence, the canonical kinetic energy density is bounded by 
\begin{widetext}
\begin{equation}
 \begin{split}
  \tau_D(\mathbf{r}) & \leq \frac{|\nabla \rho(\mathbf{r})|^2}{8 \rho(\mathbf{r})} + \frac{1}{2} \left[ |\boldsymbol{\kappa}(\mathbf{r})|^2 + \tfrac{3}{2} \left( \mathbf{r}\cdot \frac{\partial \boldsymbol{\kappa}(\mathbf{r})}{\partial r_{\alpha}} \right)^2 + \tfrac{1}{16\mu} \sum_{\beta=1}^3 \left( \frac{\partial \kappa_{\beta}(\mathbf{r})}{\partial r_{\alpha}} \right)^2 \right] \rho(\mathbf{r}) 
   + \frac{1}{2} (\lambda+\mu) \rho(\mathbf{r}) \\
 & \leq \frac{|\nabla \rho(\mathbf{r})|^2}{8 \rho(\mathbf{r})} + \frac{1}{2} \left[ |\boldsymbol{\kappa}(\mathbf{r})|^2 + (\tfrac{3}{2} r^2 + \tfrac{1}{16\mu}) \sum_{\alpha,\beta=1}^3 \left( \frac{\partial \kappa_{\beta}(\mathbf{r})}{\partial r_{\alpha}} \right)^2  \right] \rho(\mathbf{r}) 
  + \frac{1}{2} (\lambda+\mu) \rho(\mathbf{r}),
 \end{split}
\end{equation}
\end{widetext}
where the
second inequality was obtained by using $|r_{\beta}| \leq
|\mathbf{r}|$.  A finite canonical kinetic energy of $P_\lambda$,
$Q_\mu$ and $D_{\lambda\mu}$
is ensured if
\begin{align}
  \label{eqFINTW}
  T_W[\rho] & = \int \! \frac{|\nabla\rho|^2}{8\rho} \,\mathrm d\mathbf{r} =
  \frac{1}{2}\int \! |\nabla\sqrt{\rho}|^2 \,\mathrm d\vec{r} < \infty, \\
  \label{eqFINTp}
  T_{\mathrm{p}}[\rho,\jpvec] &  = \int \! \frac{|\jpvec|^2}{2\rho} \,\mathrm d\mathbf{r} = \frac{1}{8} \int \! \rho \kappa^2 \,\mathrm d\mathbf{r} < \infty, \\
  \label{eqFINTab}
  T_{\alpha\beta}[\rho,\jpvec] & = \int \! (1+r^2) \rho \left( \frac{\partial \kappa_{\beta}(\mathbf{r})}{\partial r_{\alpha}} \right)^2 \mathrm d\mathbf{r} < \infty.
\end{align}
We remark that, by the definition of the vorticity, $\boldsymbol{\nu} = \nabla\times\rho^{-1} \jpvec =
\frac{1}{2} \nabla\times\boldsymbol{\kappa}$, a consequence of the last condition
is that
\begin{equation}
  \int \! (1+r^2) \rho \nu^2 \mathrm d\mathbf{r} < \infty.
\end{equation}
We have thus proved the following result:
\begin{theorem}\label{thm:globalrep}
  Let $\rho$ and $\boldsymbol{\kappa}$ be given such that $\rho\geq
  0$, $\sqrt{\rho} \in H^1(\mathbb{R}^3)$, $\rho \kappa_{\alpha}^2 \in
  L^1(\mathbb{R}^3)$, $(1+r^2) \rho (\partial\kappa_{\beta}/\partial
  r_{\alpha})^2 \in L^1(\mathbb{R}^3)$ for all Cartesian components
  $\alpha,\beta\in\{1,2,3\}$. Then there exist real constants
  $\lambda,\mu\geq 0$ and a 1-rdm $D$ with density $\rho$ and current
  $\jpvec = \frac{1}{2} \rho \boldsymbol{\kappa}$ such that the
  canonical kinetic energy is bounded by
  \begin{align*}
    \frac{1}{2} &\tr(\boldsymbol \nabla_1  \cdot\boldsymbol \nabla_2 D ) \leq T_W[\rho] +  4
    T_{\mathrm{p}}\left[\rho, \tfrac{1}{2} \rho\kappa\right] + \frac{1}{2}
    (\mu+\nu)N \notag\\ 
    & \quad + \int \! \!\rho(\mathbf{r}) \left(\tfrac{3}{4} r^2 + \tfrac{1}{32\mu}\right) \!\sum_{\alpha,\beta=1}^3 \!\left( \frac{\partial \kappa_{\beta}(\mathbf{r})}{\partial r_{\alpha}} \right)^2 \!\mathrm d\mathbf{r}.
  \end{align*}
\end{theorem}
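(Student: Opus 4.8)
The plan is to assemble the explicit construction of the preceding subsections and to promote the formal diagonal-derivative calculations to rigorous statements by invoking the diagonal-extraction results of Section~\ref{secDIAG}. First I would fix $\lambda,\mu$ large enough that the eigenvalue bound of the preceding theorem holds, so that $P_\lambda$ and $Q_\mu$ both lie in $\mathcal{D}_{N,1}$ and reproduce the prescribed density $\rho$. By convexity of $\mathcal{D}_{N,1}$ together with the linearity of the diagonal in the operator, the symmetric combination $D = D_{\lambda\mu} = \tfrac12(P_\lambda + Q_\mu)$ is then an element of $\mathcal{D}_{N,1}$ with $\diag D = \rho$ almost everywhere, settling requirement~(a).

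The central step is to verify that $D$ has a \emph{locally finite kinetic energy}, which is exactly the hypothesis needed for Theorems~\ref{thm:currents} and~\ref{thm:summary} to apply. Using the explicit factorizations $G_\lambda$, $H_\mu$ of~\eqref{eq:GH1}--\eqref{eq:GH2}, the condition $\nabla_2 G\in L^2(\RR^3\times\RR^3_\loc)$ is equivalent to the local integrability of the kinetic-energy density $\tau_D$ computed above. I would integrate the Young-inequality bound on $\tau_D$ term by term: the von Weizs\"acker term is controlled by $\sqrt{\rho}\in H^1(\RR^3)$, the $|\boldsymbol{\kappa}|^2\rho$ term by $\rho\kappa_\alpha^2\in L^1(\RR^3)$, the gradient-of-$\boldsymbol{\kappa}$ terms by $(1+r^2)\rho(\partial\kappa_\beta/\partial r_\alpha)^2\in L^1(\RR^3)$ (absorbing the factor $\tfrac32 r^2 + \tfrac{1}{16\mu}$ into a multiple of $1+r^2$), and the residual constant term by $\|\rho\|_1 = N$. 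Hence $\tau_D\in L^1(\RR^3)\subset L^1_\loc(\RR^3)$, which establishes a locally (indeed globally) finite kinetic energy.

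With this in hand, Theorem~\ref{thm:currents} makes the formal calculation~\eqref{eq:formalcalculation} rigorous: $\vec{c}_D = \diag(-\rmi\nabla_1 D)$ almost everywhere. Taking real parts, the spurious $\rho\,\mathbf{r}\cdot\partial\boldsymbol{\kappa}/\partial r_\alpha$ contributions of $P_\lambda$ and $Q_\mu$ cancel in the combination, leaving $\jpvec_D = \tfrac12\rho\boldsymbol{\kappa} = \jpvec$ and settling requirement~(b). Finally, the stated kinetic-energy bound follows by integrating the pointwise upper bound on $\tau_D$ over $\RR^3$, using $\tfrac12\tr(\boldsymbol \nabla_1\cdot\boldsymbol \nabla_2 D) = \int\tau_D\,\mathrm d\mathbf{r}$ and identifying the resulting integrals with $T_W[\rho]$, $4T_{\mathrm p}[\rho,\tfrac12\rho\kappa]$, the $(1+r^2)$-weighted gradient term, and $\tfrac12(\lambda+\mu)N$.

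I expect the main obstacle to be not any single computation but the careful bookkeeping needed to ensure that the heuristic pointwise diagonal derivatives genuinely compute the current and kinetic-energy densities of the \emph{operator} $D$ viewed as an $L^2$-kernel. This is precisely the content of Theorems~\ref{thm:lockin} and~\ref{thm:currents}, whose applicability rests on exhibiting an explicit factorization with $\nabla_2 G$ locally square-integrable; once the integrability of $\tau_D$ is secured from the three regularity hypotheses, the remaining steps are mechanical.
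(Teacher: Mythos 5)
Your proposal is correct and follows essentially the same route as the paper: form the convex combination $D_{\lambda\mu}=\tfrac12(P_\lambda+Q_\mu)$ with $\lambda,\mu$ chosen to satisfy the eigenvalue bound, verify local (in fact global) finiteness of the kinetic energy by integrating the Young-inequality bound on $\tau_D$ term by term against the three regularity hypotheses, and then invoke Theorems~\ref{thm:lockin} and~\ref{thm:currents} to turn the formal diagonal-derivative cancellation of the spurious $\rho\,\mathbf{r}\cdot\partial\boldsymbol{\kappa}/\partial r_\alpha$ terms into a rigorous statement that $\jpvec_D=\tfrac12\rho\boldsymbol{\kappa}$. The only point left tacit (which the paper also relegates to a one-line remark) is that $\sqrt{\rho}\in H^1(\mathbb{R}^3)$ guarantees, via Sobolev embedding, the $\rho\in L^q$ with $q>1$ needed for the eigenvalue-bound theorem to apply.
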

Note that, by a Sobolev inequality, $\sqrt{\rho}\in H^1(\RR^3)$ implies that $\rho\in L^q(\RR^3)$ for all $q\in [1,3]$.

\subsection{Lifting the global integrability condition}

Theorem~\ref{thm:globalrep} may be strengthened by replacing the
global integrability conditions on the total kinetic energy by local
integrability conditions, replacing integrals $\RR^3$ by integrals
over arbitrary compact sets $K\subset\RR^3$. A larger class of $\rho$
and $\boldsymbol{\kappa}$ are then seen to be reproducible, albeit
with merely a locally finite kinetic energy.
\begin{theorem}
  Let $\rho$ and $\boldsymbol{\kappa}$ be given such that $\rho\geq
  0$, $\rho \in L^1(\mathbb{R}^3)\cap L^q(\RR^3)$, $q>1$, $\rho^{-1}
  |\boldsymbol \nabla\rho|^2 \in L^1_{\loc}(\mathbb{R}^3)$, $\rho
  \kappa_{\alpha}^2 \in L^1_{\loc}(\mathbb{R}^3)$, $(1+r^2) \rho
  (\partial\kappa_{\beta}/\partial r_{\alpha})^2 \in
  L^1_{\loc}(\mathbb{R}^3)$ for all Cartesian components
  $\alpha,\beta\in\{1,2,3\}$. Then there exist a 1-rdm $D$ with
  density $\rho$ and current $\jpvec = \frac{1}{2} \rho
  \boldsymbol{\kappa}$.
\end{theorem}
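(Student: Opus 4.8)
The plan is to reuse verbatim the construction of Section~\ref{redmat}: the factorized operators $P_\lambda$ and $Q_\mu$ of Eqs.~\eqref{eq:PQ1}--\eqref{eq:gh2} together with the convex combination $D=\tfrac12(P_\lambda+Q_\mu)$. The sole place where Theorem~\ref{thm:globalrep} invoked \emph{global} integrability was in concluding that the \emph{total} kinetic energy is finite; for the present, weaker assertion we need only that $D$ has a \emph{locally} finite kinetic energy, so that Theorem~\ref{thm:currents} applies and the formal current calculation of Eq.~\eqref{eq:formalcalculation} becomes rigorous. The proof therefore splits into three tasks: (i) re-establishing $P_\lambda,Q_\mu,D\in\mathcal{D}_{N,1}$ with diagonal $\rho$; (ii) upgrading the pointwise kinetic-energy bound of Section~\ref{cankin} from global to local integrability; and (iii) reading off the current from Theorem~\ref{thm:currents}.

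For task~(i) I would choose $\lambda,\mu\geq\tfrac{2p}{\pi}(\tfrac14 N\|\rho\|_q)^{2p/3}$, the eigenvalue bound of Section~\ref{redmat}. This bound depends only on $N=\|\rho\|_1$ and $\|\rho\|_q$, both finite because $\rho\in L^1(\RR^3)\cap L^q(\RR^3)$, and does not involve $\boldsymbol\kappa$ at all. The verification already carried out in Section~\ref{redmat} then shows, word for word, that $P_\lambda$ and $Q_\mu$ are Hermitian, positive semidefinite, of trace $N$, have no eigenvalue exceeding two, and have diagonal $\rho$. Since $\mathcal{D}_{N,1}$ is convex and $\diag$ is linear, $D$ inherits membership in $\mathcal{D}_{N,1}$ together with $\rho_D=\rho$, settling requirement~(a).

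For task~(ii) I would treat $P_\lambda$ and $Q_\mu$ through their explicit Gaussian factors $G_\lambda,H_\mu$ of Eqs.~\eqref{eq:GH1}--\eqref{eq:GH2}. Because these factors are Gaussian in $\vec u$, the quantities $\tfrac12\|\nabla_2 G_\lambda(\cdot,\vec r)\|^2_{L^2(\RR^3)}$ and $\tfrac12\|\nabla_2 H_\mu(\cdot,\vec r)\|^2_{L^2(\RR^3)}$ are evaluable in closed form, reproducing the kinetic-energy densities $\tau_{P_\lambda},\tau_{Q_\mu}$ computed in Section~\ref{cankin}; their average $\tau_D$ is dominated pointwise by $\rho^{-1}|\nabla\rho|^2/8$, a multiple of $|\boldsymbol\kappa|^2\rho$, a multiple of $(1+r^2)\rho(\partial\kappa_\beta/\partial r_\alpha)^2$, and a multiple of $\rho$. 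Under the stated hypotheses each of these lies in $L^1_{\loc}(\RR^3)$: $\rho^{-1}|\nabla\rho|^2\in L^1_{\loc}$ controls the von Weizs\"acker term, $\rho\kappa_\alpha^2\in L^1_{\loc}$ the second, $(1+r^2)\rho(\partial\kappa_\beta/\partial r_\alpha)^2\in L^1_{\loc}$ the third (it dominates both the $r^2$ and the $1/16\mu$ contributions since $(1+r^2)\geq\max\{1,r^2\}$), and $\rho\in L^1\subset L^1_{\loc}$ the last. Hence $\tau_{P_\lambda},\tau_{Q_\mu}\in L^1_{\loc}(\RR^3)$, which is precisely the statement $\nabla_2 G_\lambda,\nabla_2 H_\mu\in L^2(\RR^3\times\RR^3_{\loc})$. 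The second criterion of Theorem~\ref{thm:lockin} then certifies that $P_\lambda$ and $Q_\mu$---and by linearity of the kinetic-energy density their convex combination $D$---have a locally finite kinetic energy; in particular Theorem~\ref{thm:lockin} furnishes a factorization of $D$ with the local regularity required by Theorem~\ref{thm:currents}.

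With locally finite kinetic energy secured, Theorem~\ref{thm:currents} guarantees $\vec c_D\in L^1_{\loc}(\RR^3)$ and $\vec c_D=\diag(-\rmi\nabla_1 D)$ almost everywhere, which is exactly what legitimizes the formal diagonal differentiation leading to Eq.~\eqref{eq:formalcalculation}: averaging the two diagonal derivatives cancels the spurious $\rho\,\vec r\cdot\partial\boldsymbol\kappa/\partial r_\alpha$ terms and leaves $\Re\vec c_D=\jpvec=\tfrac12\rho\boldsymbol\kappa$, which is requirement~(b). The main obstacle is precisely this last justification. A priori the diagonal of $-\rmi\nabla_1 D$ is defined only up to a null set, and differentiation need not commute with the $\vec u$-integration in the factorization $D=G^\dag\ast G$, so the naive pointwise formula could fail. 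What rescues the argument is that the Gaussian kernels let the $\vec u$-integral $((-\rmi\nabla_2 G)^\dag\ast G)(\vec r,\vec r)$ be carried out in closed form for each of $P_\lambda$ and $Q_\mu$ (compare the already-integrated forms~\eqref{eq:expanded1}--\eqref{eq:expanded2}); the resulting diagonal is then genuinely well-defined, and Theorem~\ref{thm:currents} certifies that it is the physical current. The lifting from Theorem~\ref{thm:globalrep} thus costs nothing beyond the local machinery of Section~\ref{secDIAG}, since every global integral has merely been restricted to an arbitrary compact $K\subset\RR^3$.
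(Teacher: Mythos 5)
Your proposal is correct and follows essentially the same route the paper intends: it reuses the $P_\lambda$, $Q_\mu$ construction (whose $\mathcal{D}_{N,1}$ membership never involved $\boldsymbol{\kappa}$), localizes the kinetic-energy bound of Section~\ref{cankin} so that Theorem~\ref{thm:lockin} certifies locally finite kinetic energy, and then invokes Theorem~\ref{thm:currents} to legitimize the formal current calculation of Eq.~\eqref{eq:formalcalculation}. This is precisely the ``replace integrals over $\RR^3$ by integrals over compact $K$'' argument the paper sketches, carried out in full detail.
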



\section{Discussion}
\label{discussion}

We have provided an explicit construction of a 1-rdm that reproduces a
prescribed density and paramagnetic current density. This type of
$N$-representability problem arises in Kohn--Sham CDFT, as it is known
that not all current densities can be represented by a single
Kohn--Sham orbital. Lieb and Schrader have recently proved
\cite{LIEB_SCHRADER_2013}, under some additional assumptions, that
there also exist current densities that cannot be represented by two
Kohn--Sham orbitals. The question is open for three orbitals. For four or more
orbitals, Lieb and Schrader provide an explicit Slater determinant
that reproduces any density and paramagnetic current that satisfy
mild regularity conditions. Our results are complementary in that we
establish that an \emph{extended} Kohn--Sham approach, where
fractional occupation numbers are allowed even if there is an integral
total number of electrons, is flexible enough to represent any density
and paramagnetic current density, under minimal regularity assumptions
(finite $T_{\mathrm{W}}$, $T_{\mathrm{p}}$, and $T_{\alpha\beta}$).

The generalization from finite total canonical kinetic energy to
finite local canonical kinetic energy is of some value in light of
gauge freedom. The kinetic energy $T_{\mathrm{p}}[\rho,\jpvec]$ is not
gauge invariant; on the contrary, it can made to become infinite by
applying a gauge transformation $\jpvec \mapsto \jpvec + \rho \nabla \chi$
with a rapidly growing gauge function $\chi$. Our results establish
that such gauge transformations do not affect $N$-representability, as
long as $\chi$ exhibits some minimal regularity.

The explicit constructions of density matrices can be used to provide
orbital-free upper bounds on the canonical kinetic energy
$T_s[\rho,\jpvec]$ for an extended Kohn--Sham formalism. Combining the
above results with the standard lower bound $T_W + T_{\mathrm{p}}$ on
the kinetic energy, we get the following orbital-free bounds on the
extended Kohn--Sham kinetic energy,
\begin{align}
  T_W + T_{\mathrm{p}} &\leq \bar{T}[\rho,\jpvec] \notag \\
  & \leq T_W + 4 T_{\mathrm{p}} + (\lambda+\mu) N \notag \\
 & \quad + \int \! \rho(\mathbf{r}) \left(\tfrac{3}{2} r^2 + \tfrac{1}{16\mu}\right) \left|\nabla_{\alpha} \kappa_{\beta}(\mathbf{r})\right|^2 \mathrm d\mathbf{r}.
\end{align}

Noting that  several authors, following Vignale and
Rasolt~\cite{VIGNALE_PRB37_10685}, have discussed CDFT formulations in
terms of spin-resolved densities
$(\rho_{\uparrow},\rho_{\downarrow},\jpvecix{\uparrow},\jpvecix{\downarrow})$,
we also remark that our 1-rdm construction is easily modified for
spin-resolved 1-rdms $D^{\uparrow\uparrow}$ and
$D^{\downarrow\downarrow}$. The eigenvectors then correspond to natural
spin-orbitals with eigenvalues bounded by one rather than by two as in the
case of natural spatial orbitals. The modifications to the above
presentation are trivial---condition \ref{item:prop4} in
Theorem~\ref{thm:DN1properties} becomes that no occupation is larger
than 1, and the factors $\tfrac{1}{4}$ consequently disappear from
Eqs.~\eqref{eqLAMBDAMUCOND} and \eqref{eqLAMBDACONDITION}.

\acknowledgments

The authors would like to thank E.H.~Lieb and R.~Schrader for giving
one of us (S.~Kvaal) early access to their manuscript of
Ref.~\cite{LIEB_SCHRADER_2013} and for interesting discussions, which
spurred the completion of the present work.

This work was supported by the Norwegian Research Council through the
CoE Centre for Theoretical and Computational Chemistry (CTCC) Grant
No.\ 179568/V30 and the Grant No.\ 171185/V30 and through the European
Research Council under the European Union Seventh Framework Program
through the Advanced Grant ABACUS, ERC Grant Agreement No.\ 267683.

\appendix
\section{The regular representation of trace-class operators}

The density matrix $D(\vec{r},\vec{s})$ is an element of
$L^2(\RR^3\times \RR^3)$ and also the kernel of a trace-class
operator over $L^2(\RR^3)$. As such, it is not pointwise defined
everywhere. At the same time, we wish to make sense of ``the
diagonal $D(\vec{r},\vec{r})$'' in order to define the density in an unambiguous manner.

Brislawn \cite{BRISLAWN_1988} has presented a thorough study of
trace-class operators and their kernels. The basic tools are found in this
reference, but we restate some results for a self-contained
treatment. We begin by clarifying some points concerning Lebesgue spaces that are
often glossed over but are important here.

\subsection{Lebesgue spaces}

Let $X\subset\RR^n$ be an open set.
The $L^p(X)$ norm of a measurable function $f : X\rightarrow
\CC$ is defined by
\begin{equation}
  \|f\|_p := \left( \int_X |f(x)|^p \,\mathrm dx \right)^{1/p}.
\end{equation}
The vector space $\mathcal{L}^p(X)$ consists of all functions $f$ such that
$\|f\|_p<+\infty$. The space $\mathcal{L}^p(X)$ is not a normed space, since
$\|f\|_p = 0$ if and only if $f(x) = 0$ for almost all $x\in
X$ (rather than for all $x \in X$). On the other hand, the set $L^p(X)$ consisting of all \emph{equivalence classes} $[f] =
\{ g\in\mathcal{L}^p(X) : \|f-g\|_p = 0 \}$ is a normed space. It is customary to speak
of a function $f$ as an element of $L^p(X)$ even though, strictly speaking, it is a
\emph{representative} of $[f]\in L^p(X)$.

This distinction between $f$ and $[f]$ is not merely academic: two pointwise defined wave functions $\Psi$ and
$\Phi$ describe the same physical state if and only if $\|\Psi
-\Phi\|_2 = 0$. Thus, $[\Psi]\in L^2(\RR^{3N})$ is the
wave function. Similarly, a reduced density matrix $D\in
L^2(\RR^3\times\RR^3)$ is not defined pointwise: its formal diagonal
$D(\vec{r},\vec{r})$ may therefore be redefined without changing the
physics. If $[\Psi] = [\Phi]$, then $D_\Psi = D_\Phi$ almost everywhere, but if $D$ is
\emph{given} there is no \emph{a priori} way to know how the pointwise
values $D(\vec{r},\vec{s})$ are affected by modifying the wave function on a set
of zero measure.

\subsection{Locally integrable functions}
\label{secLOCINT}

A function $f\in L^p_\loc(\RR^n)$ if and only if $f\in L^p(K)$ for every
compact measurable $K\subset\RR^n$. We furthermore have $L^{q}_\loc\subset
L^p_\loc$ for $q\geq p$, and 
\begin{equation}
L^p(\RR^n)\subset L^p_\loc(\RR^n)
\subset L^1_\loc(\RR^n).\label{eq:localspaces}
\end{equation}
Clearly, $L^1_\loc$ is a large class of functions
and functions in $L^1_\loc$ are said to be
``locally integrable''. 
We also need a
slightly more general notion of local integrability as follows:
\begin{definition}
  Let $X\subset\RR^n$, $Y\subset\RR^m$ be open sets. The set
  $L^p(X_\loc\times Y)$ is the set of (equivalence classes of) all
  measurable functions $u : X\times Y\rightarrow \CC$ such that for
  all compact measurable $K\subset X$, $u\in L^p(K\times Y)$. A
  similar definition is made for arbitrary products and positions of
  the subscript ``$\loc$''.  In particular, $L^p_\loc(X) =
  L^p(X_\loc)$.
\end{definition}

\subsection{The regular representation}

The goal of this section is to establish a unique representative
$\tilde{f}$ of $[f]\in L^1_\loc$, called the regular representative of
$f$. This representative will aid in defining the diagonal of $D\in
\mathcal{D}_{N,1}$. The first step is to introduce the local averaging
operator $A_\epsilon$:

\begin{definition}[Local averaging operator $A_\epsilon$]
  Let $f\in L^1_\loc(\RR^n)$ and $\epsilon>0$. 
  For a box $C_\epsilon = [-\epsilon,\epsilon]^n$ of Lebesgue measure $|C_\epsilon| = (2\epsilon)^n$,
  the (linear) local averaging operator $A_\epsilon : L^1_\loc \rightarrow L^1_\loc$ is defined by 
  \begin{equation}
    A_\epsilon f(x) := \frac{1}{|C_\epsilon|} \int_{C_\epsilon} \! f(x+y) \,\mathrm dy. \label{Aeps}
  \end{equation}
\end{definition}

Since $C_\epsilon$ is compact, $A_\epsilon f(x)$ is 
everywhere finite and is 
independent of the particular representative $f$ of $[f]$ that appears in  
the integrand. It can be shown that $A_\epsilon f(x)$ is continuous both in $x$ and
in $\epsilon>0$ \cite{STEIN_1970}.
We are here interested in the limit $\epsilon\rightarrow 0$ and therefore
invoke the Lebesgue differentiation theorem:
\begin{theorem}[Lebesgue differentiation theorem]\label{thm:lebesguediff}
  Let $f\in L^1_\loc(\RR^n)$. Then for almost all $x\in \RR^n$,
  \begin{equation}
    \lim_{\epsilon\rightarrow 0} A_\epsilon f(x) = f(x). \label{eq:lebesguediff}
  \end{equation}
\end{theorem}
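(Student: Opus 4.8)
The plan is to follow the classical route via the Hardy--Littlewood maximal function, reducing first to the globally integrable case. Since the claim is a pointwise almost-everywhere statement and, for small $\epsilon$, the value $A_\epsilon f(x)$ depends only on $f$ in a small neighbourhood of $x$, I would fix a nested exhaustion $\RR^n = \bigcup_m B_m$ by open balls with $\overline{B_m}\subset B_{m+1}$ and replace $f$ by $f_m := f\,\mathbf{1}_{B_{m+1}} \in L^1(\RR^n)$. For $x\in B_m$ and $\epsilon$ small enough that $x + C_\epsilon\subset B_{m+1}$ one has $A_\epsilon f(x) = A_\epsilon f_m(x)$, so establishing the limit almost everywhere on $B_m$ for each $f_m$ yields the result on all of $\RR^n$. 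It therefore suffices to treat $f \in L^1(\RR^n)$.

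The key device is the maximal operator $Mf(x) := \sup_{\epsilon > 0} A_\epsilon |f|(x)$, which is lower semicontinuous (a supremum of the continuous functions $A_\epsilon|f|$) and hence has open, measurable superlevel sets. The central analytic input, and the main obstacle, is the weak-type $(1,1)$ estimate
\begin{equation*}
  |\{ x : Mf(x) > \alpha \}| \leq \frac{C_n}{\alpha}\,\|f\|_1, \qquad \alpha > 0,
\end{equation*}
with $C_n$ depending only on $n$. I would prove this by a Vitali-type covering argument: each point of the superlevel set is the centre of a box $x+C_\epsilon$ on which the average of $|f|$ exceeds $\alpha$, so $|C_\epsilon| \leq \alpha^{-1}\int_{x+C_\epsilon}|f|$; extracting a countable disjoint subfamily whose fixed dilates cover the set bounds its measure by $C_n\alpha^{-1}\|f\|_1$. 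Because the boxes $C_\epsilon$ and Euclidean balls are comparable up to dimensional constants, the standard covering lemma applies verbatim.

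With the maximal inequality in hand I would exploit the density of $\mathcal{C}_\mathrm{c}(\RR^n)$ in $L^1(\RR^n)$. Given $\delta > 0$, choose continuous compactly supported $g$ with $\|f-g\|_1 < \delta$; for such $g$ the limit is trivial, since uniform continuity gives $A_\epsilon g \to g$ everywhere. Writing $A_\epsilon f - f = A_\epsilon(f-g) + (A_\epsilon g - g) + (g-f)$ and passing to $\limsup_{\epsilon\to 0}$, the oscillation $\Omega f(x) := \limsup_{\epsilon\to 0}|A_\epsilon f(x) - f(x)|$ is bounded pointwise by $M(f-g)(x) + |f(x)-g(x)|$.

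Finally I would combine these with Chebyshev's inequality. For any $\alpha > 0$ the weak-type bound gives $|\{M(f-g) > \alpha\}| \leq C_n\alpha^{-1}\delta$ and Chebyshev gives $|\{|f-g| > \alpha\}| \leq \alpha^{-1}\delta$, whence $|\{\Omega f > 2\alpha\}| \leq (C_n+1)\alpha^{-1}\delta$. Since $\delta > 0$ was arbitrary, this measure vanishes for every $\alpha > 0$, and because $\{\Omega f > 0\} = \bigcup_{k\geq 1}\{\Omega f > 1/k\}$ is a countable union of null sets, $\Omega f = 0$ almost everywhere, which is precisely the assertion~\eqref{eq:lebesguediff}.
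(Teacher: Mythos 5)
Your proof is correct. It is worth noting that the paper itself contains no argument for this theorem: its ``proof'' is simply the citation to Stein's monograph, so there is nothing internal to compare against, and your proposal in effect supplies the standard self-contained proof that the paper delegates to the literature. Your route --- localization to $f\in L^1(\RR^n)$ via an exhaustion (legitimate, since $A_\epsilon f(x)$ depends only on $f$ on $x+C_\epsilon$), the weak-type $(1,1)$ maximal inequality by a Vitali-type covering argument (which applies to axis-parallel boxes exactly as to balls, up to a dimensional constant), density of $\mathcal{C}^\infty_\mathrm{c}$ (or $\mathcal{C}_\mathrm{c}$) in $L^1$, and the oscillation bound $\Omega f \leq M(f-g) + |f-g|$ combined with Chebyshev and $\delta\to 0$ --- is exactly the classical chain of reasoning in the cited reference. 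One point in your write-up deserves emphasis: you correctly base the argument on the \emph{weak-type} estimate $|\{Mf>\alpha\}|\leq C_n\alpha^{-1}\|f\|_1$ rather than on a strong bound. This is essential, because the strong inequality $\|Mf\|_p\leq C_p\|f\|_p$ --- which is how the paper's Appendix states the Hardy--Littlewood maximal theorem, without restricting to $p>1$ --- fails at $p=1$: for any nonzero $f\in L^1(\RR^n)$ one has $Mf(x)\gtrsim |x|^{-n}$ for large $|x|$, so $Mf\notin L^1(\RR^n)$. Thus your proof could not have been run by quoting the paper's maximal theorem verbatim at $p=1$; proving the weak-type bound directly, as you do, is the right move.
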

\begin{proof}
  See Ref.~\cite{STEIN_1970}
\end{proof}
Since $A_\epsilon f(x)$ is independent of the particular $f\in[f]$, 
this limit determines a unique representative:
\begin{definition}[Regular representative]
  The regular representative $\tilde{f}$ of 
  $f \in L^1_\loc(\RR^n)$ is defined by
  \begin{equation}
    \tilde{f}(x) := \lim_{\epsilon\rightarrow 0} A_\epsilon f(x)
  \end{equation}
whenever the limit in Eq.\;\eqref{eq:lebesguediff} exists.
\end{definition}
Since $\tilde{f}(x)=f(x)$ almost everywhere, $\tilde{f}$ and $f$ represent 
the same element $[f] \in L^1_\loc$. 
Moreover, it is easy to see that $\tilde{f}$ is independent of the
starting representative $f$ and that the set of zero measure
(where $\tilde{f}$ is undefined) is uniquely given by $[f]\in
L^1_\loc$. Intuitively, $\tilde{f}$ is more regular than
$f$, ``smoothing out'' unnecessary discontinuities, and so on. 

Related to the regular representative is the Hardy--Littlewood maximal
function and associated inequality:
\begin{definition}[Hardy--Littlewood maximal function]
  For $f \in L^1_\loc(\RR^n)$ and $C_\epsilon =
  [-\epsilon,\epsilon]^n$ of Lebesgue measure $|C_\epsilon| =
  (2\epsilon)^n$, the Hardy--Littlewood maximal function $Mf$ is
  defined by
  \begin{equation}
    Mf(x) := \sup_{\epsilon>0}
    \frac{1}{|C_\epsilon|} \int_{C_\epsilon} \! |f(x+y)|\, \mathrm d y.
  \end{equation}
\end{definition}
The following theorem is also called the Maximal Theorem:
\begin{theorem}[Hardy--Littlewood maximal inequality]
  If $f\in L^p(\RR^n)$, then $Mf(x)$ is finite almost everywhere. 
  Moreover, there exists a constant $C_p$ (independent of $f$ and
  $n$) such that
  \[ \|Mf\|_p \leq C_p \|f\|_p. \]
\end{theorem}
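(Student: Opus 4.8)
The plan is to establish the strong $(p,p)$ bound by interpolating between the two endpoints $p=1$ and $p=\infty$. First I would record the trivial $L^\infty$ estimate: since $Mf(x)$ is a supremum of averages of $|f|$ over cubes, $Mf(x)\le \|f\|_\infty$ pointwise, so $M$ maps $L^\infty(\RR^n)$ into itself with norm $1$. The substantive endpoint is the weak-type $(1,1)$ inequality
\[
  |\{x\in\RR^n : Mf(x) > \alpha\}| \le \frac{C}{\alpha}\,\|f\|_1 ,
\]
which I would prove by a Vitali-type covering argument: for each $x$ in the superlevel set there is a cube $Q_x$ centred at $x$ with $|Q_x| < \alpha^{-1}\int_{Q_x}|f|$; extracting a disjoint subfamily $\{Q_j\}$ whose threefold dilates cover the original family, and then summing $\sum_j |Q_j| \le \alpha^{-1}\|f\|_1$, controls the measure of the superlevel set. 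In particular the superlevel sets are finite for every $\alpha>0$, which already yields that $Mf$ is finite almost everywhere whenever $f\in L^1(\RR^n)$, and hence, by splitting $f$ into a bounded part and an $L^1$ part, whenever $f\in L^p(\RR^n)$.

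Second, I would pass from the two endpoints to the range $1<p<\infty$ by the Marcinkiewicz interpolation theorem. The key structural input is that $M$ is sublinear, $M(f+g)\le Mf + Mg$, so the theorem applies directly to the pair consisting of the weak-$(1,1)$ bound and the strong-$(\infty,\infty)$ bound, producing $\|Mf\|_p \le C_p\|f\|_p$ with a constant that is finite for each $p>1$ and degenerates as $p\to 1$, consistent with the known failure of the strong $(1,1)$ bound. A self-contained alternative that avoids the interpolation machinery is to split, at each height $\alpha$, $f = f\,\chi_{\{|f|>\alpha/2\}} + f\,\chi_{\{|f|\le\alpha/2\}}$, apply the weak-$(1,1)$ bound to the first piece and the $L^\infty$ bound to the second, and then integrate the resulting distributional inequality for $\alpha^{p}\,|\{Mf>\alpha\}|$ against $\mathrm d\alpha/\alpha$ using the layer-cake formula $\|Mf\|_p^p = p\int_0^\infty \alpha^{p-1}|\{Mf>\alpha\}|\,\mathrm d\alpha$.

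The main obstacle is the claimed independence of the constant $C_p$ from the dimension $n$. The covering argument above produces a weak-$(1,1)$ constant of order $3^n$, and interpolation then carries this dimensional growth directly into $C_p$, so this elementary route cannot yield a dimension-free bound. Obtaining a constant depending only on $p$ requires the substantially more delicate techniques of Stein, in which the crude covering estimate is replaced by square-function and heat-semigroup comparisons for the centred maximal operator over Euclidean balls, together with a method-of-rotations reduction adapted to the cube $C_\epsilon$. For the purposes of this appendix it suffices to invoke these results directly from Ref.~\cite{STEIN_1970}; the finiteness almost everywhere and the qualitative strong $(p,p)$ bound, which are all that the diagonal-extraction arguments of Section~\ref{secDIAG} actually use, already follow from the elementary covering-plus-interpolation argument sketched here.
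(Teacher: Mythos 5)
Your proposal is correct, but it is genuinely different in character from what the paper does: the paper offers no argument at all, simply deferring to Stein's monograph (Ref.~\cite{STEIN_1970}), whereas you give the standard self-contained route --- the pointwise $L^\infty$ bound, the weak-$(1,1)$ inequality via a Vitali covering argument, and Marcinkiewicz interpolation (or equivalently the direct layer-cake splitting $f = f\,\chi_{\{|f|>\alpha/2\}} + f\,\chi_{\{|f|\le\alpha/2\}}$). This is the proof one would find by opening Stein's book, so in substance you have reconstructed the cited argument rather than replaced it. Two remarks. First, you are right to flag the dimension-independence of $C_p$ as the one claim your argument cannot deliver: the covering constant $3^n$ propagates through interpolation. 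In fact you are being charitable to the paper here --- dimension-free strong $(p,p)$ bounds are not in Stein's 1970 book either (Stein's dimension-free result, from the early 1980s, concerns the centred maximal function over Euclidean \emph{balls}; for the cubes $C_\epsilon$ used in this appendix, dimension-free bounds for all $p>1$ came only much later, with Bourgain), so the parenthetical ``independent of $n$'' in the theorem statement overreaches its own citation. Since the paper only ever applies the maximal theorem in fixed dimensions ($n=3$ and $n=3N-3$), your dimension-dependent constant suffices for every use made of the result, as you correctly observe. Second, a small wording fix: the superlevel sets $\{Mf>\alpha\}$ have \emph{finite measure}, they are not ``finite''; and the strong bound should be asserted only for $1<p\le\infty$, since it fails at $p=1$ --- a restriction the paper's statement also neglects to make explicit.
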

\begin{proof}
  See Ref.~\cite{STEIN_1970}.
\end{proof}

Since $|A_\epsilon f(x)| \leq Mf(x)$ for all $x$, we obtain as a corollary that
$A_\epsilon$ is a bounded linear operator from $L^p$ to $L^p$. 
Using this fact, it is straightforward to show that $A_\epsilon$ not
only smoothes $f$, but also the mode of convergence:
\begin{lemma}
  Suppose $f_n\rightarrow f$ in $L^p(X)$. For all $\epsilon>0$,
  $A_\epsilon f_n\rightarrow A_\epsilon f$ uniformly (i.e., in $L^\infty(X)$).
\end{lemma}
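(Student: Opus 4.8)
The plan is to reduce the statement to a single operator bound: that $A_\epsilon$ maps $L^p$ continuously into $L^\infty$, with a norm depending only on $\epsilon$, $p$, and the dimension $n$. Because $A_\epsilon$ is linear, uniform convergence of $A_\epsilon f_n$ to $A_\epsilon f$ is exactly the statement $\|A_\epsilon(f_n - f)\|_\infty \to 0$, so it suffices to produce a constant $C_{\epsilon,p}$ with $\|A_\epsilon g\|_\infty \leq C_{\epsilon,p}\,\|g\|_p$ for every $g$. Applying this with $g = f_n - f$ and invoking the hypothesis $\|f_n - f\|_p \to 0$ then closes the argument at once.

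To produce the bound I would fix $x$ and estimate $A_\epsilon g(x)$ pointwise. Starting from $|A_\epsilon g(x)| \leq |C_\epsilon|^{-1}\int_{C_\epsilon} |g(x+y)|\,\mathrm dy$, I would apply H\"older's inequality on the box $C_\epsilon$ with the conjugate exponent $p'$ satisfying $1/p + 1/p' = 1$, obtaining $\int_{C_\epsilon} |g(x+y)|\,\mathrm dy \leq |C_\epsilon|^{1/p'}\bigl(\int_{C_\epsilon} |g(x+y)|^p\,\mathrm dy\bigr)^{1/p}$. By translation invariance of Lebesgue measure and by enlarging the region of integration from $x + C_\epsilon$ to all of $\RR^n$, the last factor is bounded by $\|g\|_p$. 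Collecting the powers of $|C_\epsilon|$ gives $\|A_\epsilon g\|_\infty \leq |C_\epsilon|^{-1/p}\,\|g\|_p$, which is the desired estimate with $C_{\epsilon,p} = |C_\epsilon|^{-1/p}$; the endpoint $p = \infty$ is recovered with the convention $1/\infty = 0$ and is in any case immediate, since $A_\epsilon$ is an average and hence an $L^\infty$-contraction.

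The proof presents almost no obstruction; the only genuine choice is to use this $L^p \to L^\infty$ bound from H\"older rather than the $L^p \to L^p$ bound that the Hardy--Littlewood maximal inequality supplies, since the latter controls only $L^p$ convergence and would not yield the required uniformity. The one point I would be careful to state cleanly is that $A_\epsilon g(x)$ implicitly refers to a representative of $g$ extended to $\RR^n$: because the estimate uses only $|g|$ and the translation invariance of the measure, it is insensitive both to the choice of representative and to whether the averaging window $x + C_\epsilon$ protrudes outside $X$.
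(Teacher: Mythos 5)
Your proof is correct and follows essentially the same route as the paper's: both reduce the claim, via linearity of $A_\epsilon$, to the operator bound $\|A_\epsilon g\|_\infty \leq K(\epsilon)\|g\|_p$, and both obtain that bound by applying H\"older's inequality over the box $x + C_\epsilon$, arriving at the same constant $|C_\epsilon|^{-1/p} = |C_\epsilon|^{1/q-1}$ with $1/p+1/q=1$. Your added remarks on the endpoint $p=\infty$ and on the insensitivity of the estimate to the choice of representative are fine but not needed beyond what the paper records.
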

\begin{proof}
  We show that, for every $\epsilon>0$, there exists a constant
  $K(\epsilon)$ such that, for all $f\in L^p$,
  \[ \|A_\epsilon f\|_\infty \leq K(\epsilon)\|f\|_p.\]
  We have
  \begin{align*}
    |A_\epsilon f(x)| \leq \frac{1}{|C_\epsilon|} \|f\|_{L^1(x + C_\epsilon)}.
  \end{align*}
  Since $C_\epsilon$ is bounded in $\RR^n$,
  \begin{equation}
\int_{x + C_\epsilon} \!\!\!\!\!\!1\times |g(x)| \,\mathrm d x \leq
    |C_\epsilon|^{1/q}\|g\|_{L^p(x + C_\epsilon)} 
  \end{equation}
  where $1/q + 1/p = 1$. Thus,
  \[   |A_\epsilon f(x)| \leq |C_\epsilon|^{1/q-1}
  \|f\|_{L^p(\RR^n)}, \]
  independent of $x$.
\end{proof}

\subsection{The diagonal of a factorized kernel} 

Based on our intuition, we may now hypothesize that, given an arbitrary reduced
density matrix $D(\vec{r},\vec{s})\in\mathcal{D}_{N,1}$, the diagonal of $\tilde{D}$ is
the proper definition of the density:
\begin{equation}
  \rho(\vec{r}) = \tilde{D}(\vec{r},\vec{r}).
\end{equation}
This is indeed true, as we shall show.
To this end, a slight reformulation and generalization of Theorem 3.5 in
\cite{BRISLAWN_1988} is useful for us. The reformulation states that, if an
operator kernel is factorized, then the diagonal of the regular
representative is given by the diagonal of the factorization,
almost everywhere. The proof carries over with only
trivial modifications, but since it is important, we rephrase it
here.

\begin{theorem}[Diagonal of factorization]\label{thm:factorization}
  Let $(X,\mathrm dx)$ and $(Y,\mathrm dy)$ be open subsets of Euclidean
  spaces equipped with the standard Lebesgue measures.
  For $P\in L^2(X\times Y)$ and $Q\in L^2(Y\times X)$,
let $C : X\times X\rightarrow \CC$ be given by
\begin{equation}
  C(x,x') = (P\ast Q)(x,x') = \int_Y \!\! P(x,y)Q(y,x') \, \mathrm dy.
\end{equation}
Then $C\in L^2(X\times X)$ (a pointwise representative) and 
\begin{equation}
  \tilde{C}(x,x) = C(x,x)
\end{equation}
for almost all $x\in X$.
Moreover, the map $x \mapsto C(x,x) = (P\ast Q)(x,x)$ 
belongs to $L^1(X)$.
\end{theorem}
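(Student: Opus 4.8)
The plan is to dispatch the two membership claims by direct Cauchy--Schwarz estimates, and then to establish the diagonal identity $\tilde{C}(x,x)=C(x,x)$ by combining the factorization with a vector-valued Lebesgue differentiation theorem. Write $X\subset\RR^n$. By Fubini, $P(x,\cdot)\in L^2(Y)$ and $Q(\cdot,x')\in L^2(Y)$ for almost all $x,x'$, so the integral defining $C(x,x')$ converges absolutely a.e.; Cauchy--Schwarz in $y$ gives $|C(x,x')|^2\le \|P(x,\cdot)\|_{L^2(Y)}^2\,\|Q(\cdot,x')\|_{L^2(Y)}^2$, and integrating over $X\times X$ yields $\|C\|_{L^2(X\times X)}\le\|P\|_2\|Q\|_2$. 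The same estimate taken \emph{along the diagonal} gives $\int_X|C(x,x)|\,\mathrm dx\le\int_X\|P(x,\cdot)\|_{L^2(Y)}\,\|Q(\cdot,x)\|_{L^2(Y)}\,\mathrm dx\le\|P\|_2\|Q\|_2$, so the a.e.-defined map $x\mapsto C(x,x)$ lies in $L^1(X)$.

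For the diagonal identity, I would use that $A_\epsilon$ acts on $C$ through boxes in $\RR^{2n}$ and that such a box factors as $C_\epsilon\times C_\epsilon$ into two $n$-dimensional boxes. Substituting the factorization and applying Fubini (licit for almost all $x$ by the integrability just established) I would rewrite
\[
A_\epsilon C(x,x)=\int_Y P_\epsilon(x,y)\,Q_\epsilon(x,y)\,\mathrm dy,
\]
where $P_\epsilon(x,y)=|C_\epsilon|^{-1}\int_{C_\epsilon}P(x+u,y)\,\mathrm du$ averages $P$ in its first slot and $Q_\epsilon(x,y)=|C_\epsilon|^{-1}\int_{C_\epsilon}Q(y,x+v)\,\mathrm dv$ averages $Q$ in its second slot. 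Thus the averaged kernel evaluated on the diagonal is an honest $L^2(Y)$ bilinear pairing of two partially averaged kernels.

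The crucial step is to regard $P$ as the map $x\mapsto P(x,\cdot)\in L^2(Y)$, which belongs to $L^2(X;L^2(Y))$ since $\int_X\|P(x,\cdot)\|_{L^2(Y)}^2\,\mathrm dx=\|P\|_{L^2(X\times Y)}^2$, and likewise $x\mapsto Q(\cdot,x)\in L^2(Y)$. Then $P_\epsilon(x,\cdot)$ and $Q_\epsilon(x,\cdot)$ are precisely the local averages of these vector-valued functions, and the vector-valued Lebesgue differentiation theorem---valid because $L^2(Y)$ is separable---gives $P_\epsilon(x,\cdot)\to P(x,\cdot)$ and $Q_\epsilon(x,\cdot)\to Q(\cdot,x)$ in $L^2(Y)$ for almost all $x$. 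Since the pairing $(f,g)\mapsto\int_Y f g$ is bounded on $L^2(Y)\times L^2(Y)$, it follows that $A_\epsilon C(x,x)\to\int_Y P(x,y)Q(y,x)\,\mathrm dy=C(x,x)$ for almost all $x$, which is exactly $\tilde{C}(x,x)=C(x,x)$ a.e.

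I expect the vector-valued differentiation step to be the main obstacle, because the naive scalar route fails: bounding $\sup_\epsilon|A_\epsilon C(x,x)|$ by the $\RR^{2n}$ Hardy--Littlewood maximal function $MC$ only controls $MC$ in $L^2(X\times X)$, which says nothing about its restriction to the measure-zero diagonal. It is precisely the factorization that replaces this doomed $2n$-dimensional estimate by an $n$-dimensional, $L^2(Y)$-valued one on which differentiation does act; the technical care then lies in justifying the Fubini interchange and the a.e.\ vector-valued convergence. Should one prefer to avoid quoting the vector-valued theorem, the same conclusion follows by approximating $P$ and $Q$ in $L^2$ by finite sums of products $\phi(x)\psi(y)$, for which $C$ is continuous and the identity is immediate, and then controlling the error with the $L^2(Y)$-valued maximal inequality.
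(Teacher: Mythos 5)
Your proposal is correct, and its skeleton matches the paper's: Cauchy--Schwarz yields the $L^2(X\times X)$ and $L^1(X)$ claims, and Fubini factors the $2n$-dimensional box average into $A_\epsilon C(x,x)=\int_Y P_\epsilon(x,y)\,Q_\epsilon(x,y)\,\mathrm dy$ with partial averages of $P$ and $Q$. Where you genuinely diverge is the passage to the limit. The paper stays scalar: it dominates the partially averaged integrand by $M_1P(x,y)\,M_2Q(y,x)$, uses the Hardy--Littlewood maximal theorem plus Cauchy--Schwarz to see that this dominating function lies in $L^1(Y)$ for a.e.\ $x$, applies the scalar Lebesgue differentiation theorem to each factor, and concludes by dominated convergence. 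You instead regard $x\mapsto P(x,\cdot)$ and $x\mapsto Q(\cdot,x)$ as elements of $L^2(X;L^2(Y))$ and invoke the vector-valued (Bochner) Lebesgue differentiation theorem, so the partial averages converge in $L^2(Y)$-norm for a.e.\ $x$, after which boundedness of the bilinear pairing finishes the proof. Your route is tidier---no dominating function and no dominated-convergence step---but it outsources the hard analysis to a stronger black box: one must verify Bochner measurability (the identification $L^2(X\times Y)\cong L^2(X;L^2(Y))$, using separability of $L^2(Y)$), and the vector-valued theorem is itself proved from the same maximal inequality, so nothing is saved at the foundational level; the paper's scalar route is the more self-contained one given the Hardy--Littlewood material already developed in its appendix. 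Two minor repairs to your write-up: the Fubini interchange at fixed $x$ is justified not by the $L^1(X)$ bound on the diagonal ``just established'' but by absolute integrability of $(u,v,y)\mapsto P(x+u,y)Q(y,x+v)$ over $C_\epsilon\times C_\epsilon\times Y$, which Cauchy--Schwarz supplies for every $x$; and the deduction of $\|P_\epsilon(x,\cdot)-P(x,\cdot)\|_{L^2(Y)}\to 0$ from the vector-valued Lebesgue-point property uses Minkowski's integral inequality, which deserves explicit mention. Neither point is a substantive gap.
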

\begin{proof}

  We now demonstrate that $C\in L^2(X\times X)$.
  For almost all $x\in X$ and for almost all $x'\in
  X$, it holds that  $P(x,\cdot),Q(\cdot,x') \in L^2(Y)$. From the Cauchy--Schwarz inequality, we obtain 
  \begin{align}
    \left|C(x,x')\right| &\leq \int \!
    \left|P(x,y)\right|\left|Q(y,x')\right|\,\mathrm d y \nonumber \\
    &\leq \|P(x,\cdot)\|_{L^2(Y)}\|Q(\cdot,x')\|_{L^2(Y)} < +\infty
  \end{align}
  for almost all $x$ and almost all $x'$ and hence also for almost all
  $(x,x^\prime)\in X\times X$. 
  Squaring and integrating, we obtain $\|C\|_{L^2(X\times X)}^2 \leq
  \|P\|^2_{L^2(X\times Y)} \|Q\|^2_{L^2(Y\times X)}<+\infty$.

  Next, we demonstrate that the diagonal is in
  $L^1(X\times X)$.
  For $\epsilon>0$, let $A_{\epsilon,i} P(x,y)$ be the averaging
  operator acting on the $i$th argument and let $M_i P(x,y)$ be the
  maximal operator acting on the $i$th argument. For almost all
  $x,x',y$, we then obtain
  \begin{equation}
    |A_{\epsilon,1} P(x,y) A_{\epsilon,2} Q(y,x')| \leq M_1 P(x,y) M_2
    Q(y,x').
    \label{eq:bound1}
  \end{equation}
  By the Cauchy--Schwarz inequality, we obtain
  \begin{align}
    \int \! & \left|M_1 P(x,y)  M_2 Q(y,x')\right|^2 \mathrm d y\leq \notag \\ 
     & \left(\int \! \left|M_1 P(x,y)\right|^2\mathrm d y\!\right)\left(\int \! \left|M_2 Q(y,x')\right|^2\mathrm d y\!\right)
    \label{eq:bound2}
  \end{align}
  where both factors on the right-hand side are finite by the
  maximal theorem, for almost all $x$ and almost all $x'$.
  These bounds justify the use of Fubini's theorem to write
  \begin{align}
    A_\epsilon C(x,x') &= \frac{1}{|C_\epsilon|^2}
    \int_{C_\epsilon\times C_\epsilon\times Y} 
\!\!\!\!\!\!\!\!\!\!
\!\!\!\!\!\!\!\!\!\!\!
    P(x+t,y)Q(y,x'+t') \,\mathrm dt \mathrm dt' \mathrm dy
\notag\\&= \int_{Y} \! A_{\epsilon,1} P(x,y) \,
    A_{\epsilon,2} Q(y,x') \,\mathrm d y ,
    \label{eq:fubini}
  \end{align}
  which holds for almost every $x$ and $x'$.

  We now observe that, for each factor on the right-hand side,
  \begin{alignat}{2}
    \lim_{\epsilon\rightarrow 0} A_{\epsilon,1} P(x,y) &= P(x,y) &\quad
    &\mbox{a.a.~$x\in X$} \\
    \lim_{\epsilon\rightarrow 0} A_{\epsilon,2} Q(y,x') &= Q(y,x') &
    &\mbox{a.a.~$x'\in X$} .
  \end{alignat}
  The dominated convergence theorem together with the bounds in Eqs.\;\eqref{eq:bound1} and \eqref{eq:bound2} now
  imply that we can take the limit in Eq.\;\eqref{eq:fubini} to get
  \begin{equation}
    \tilde{C}(x,x) = \lim_{\epsilon\rightarrow 0} A_\epsilon C(x,x) = \int_Y \! P(x,y)Q(y,x) \,\mathrm d y
  \end{equation}
  for almost all $x$. We have
  \begin{equation}
    \int_X \! (P*Q)(x,x) \, \mathrm d x= \braket{\hat{Q},P}_{L^2(X\times Y)},
  \end{equation}
  with $\hat{Q}(x,y) = Q^\ast(y,x)$. Being an inner product on $L^2$,
  this expression is finite, completing the proof. 
\end{proof}

Remark 1: Although the diagonal of $P*Q$ is in $L^1$, we cannot conclude that
$P*Q$ is trace class---see Ref.\,\cite{BRISLAWN_1988} for a counterexample. On the other hand, if $X = Y$ in
Theorem~\ref{thm:factorization}, then $P*Q$ is by definition trace class 
and it is also true that $\tr P*Q = \int_X(\diag P*Q)(x)\,\mathrm d x$.

Remark 2: $C = P*Q$ is the kernel of a Hilbert--Schmidt operator over
$L^2(X)$. We see that it is meaningful to define the diagonal $\diag C$ of
any Hilbert--Schmidt operator on an explicitly factorized form from the expression
\begin{equation}
  [\diag P*Q](x) := (P*Q)(x,x),
\end{equation}
and the theorem states that this function belongs to $L^1(X)$, independent of the factorization.

Remark 3: As a corollary, if $P\in L^2(\RR^n_\loc\times \RR^m)$, $Q\in L^2(\RR^m\times
\RR^n_\loc)$, then $P\ast Q \in L^1_\loc(\RR^n)$.

Remark 4: If $P(x,y) = Q^\ast(y,x)$, then $P\ast Q$ is positive
semidefinite. Since $\diag P\ast Q$ is integrable, it follows from a
theorem in Ref.~\cite{BRISLAWN_1988} that $P\ast Q$ is trace class over $L^2(X)$.

\section{Some proofs from Section~\ref{secDIAG}}
\subsection{Proof of Theorem\;\ref{thm:densities}}

%

For this proof, we use Theorem~\ref{thm:factorization} in Appendix~A.

\begin{proof}
  Let $\Gamma\in\mathcal{D}_N$ be given. Assume that
  that
  $D_\Gamma(\vec{r},\vec{s})  =
  D(\vec{r},\vec{s})$ almost everywhere in $\RR^3\times\RR^3$. It
  follows that $\tilde{D}_{\Gamma}(\vec{r},\vec{r}) =
  \tilde{D}(\vec{r},\vec{r}) = (G^\dag*G)(\vec{r},\vec{r})$
  for almost all $\vec{r}$, since the regular representative is
  unique, and by using Theorem~\ref{thm:factorization}, with
  $P(\vec{r},\vec{s}) = G(\vec{s},\vec{r})^*$ and
  $Q(\vec{r},\vec{s})=G(\vec{s},\vec{r})$ ($X = Y = \RR^3$).

  We need to show that $\rho_\Gamma(\vec{r}) =
  (G^\dag*G)(\vec{r},\vec{r})$ for almost all $\vec{r}$. Assume that
  $\Gamma = \ket{\Psi}\bra{\Psi}$. Now, $\rho_\Gamma(\vec{r}) =
  \rho_\Psi(\vec{r}) = D_\Gamma(\vec{r},\vec{r})$ for almost every
  $\vec{r}$, by definition of $\rho_\Psi(\vec{r})$. Applying
  Theorem~\ref{thm:factorization} to $P(\vec{r},\vec{r}_{2:N}) =
  \Psi(\vec{r},\vec{r}_{2:N})$ and $Q(\vec{r}_{2:N},\vec{r}) =
  \Psi(\vec{r},\vec{r}_{2:N})^*$, $X = \RR^3$ and $Y = \RR^{3N-3}$, we
  see that $\rho_\Gamma(\vec{r}) = \tilde{D}_\Gamma(\vec{r},\vec{r}) =
  (G^\dag*G)(\vec{r},\vec{r})$.

  We invite the reader to fill in the details when $\Gamma$ is a general mixed state.
\end{proof}

\subsection{Proof for Theorem\;\ref{thm:lockin}} 

\begin{proof}

  $2 \Rightarrow 1$:
  Let a $G\in L^2(\RR^3\times\RR^3)$ be given such that $\nabla_2 G \in
  L^2(\RR^3\times\RR^3_\loc)$. Then, for every compact $K\subset
  \RR^3$, 
  \begin{align}
    T_\alpha(\vec{r},\vec{s}) &:= \frac{1}{2} [\partial_{2,\alpha} G]^\dag *
    [\partial_{2,\alpha}G](\vec{r},\vec{s}) \notag \\ &= \frac{1}{2} \int
    d\vec{u} \partial_{2,\alpha}G(\vec{u},\vec{r})^*\partial_{2,\alpha} G(\vec{u},\vec{s})
  \end{align}
  is in $L^2(K\times K)$ by
  Theorem~\ref{thm:factorization}. $T_\alpha$ is positive
  semidefinite, so by Remark 4 after Theorem~\ref{thm:factorization},
  $T_\alpha$ is trace class over $L^2(K)$.

  By the definition of the weak derivative and Fubini's Theorem, we
  easily verify that in fact $T_\alpha
  = \frac{1}{2} \partial_{1,\alpha}\partial_{2,\alpha} D$ almost everywhere. Thus
  $\nabla_1\cdot\nabla_2 D$ is trace-class, and $D$ has locally finite
  kinetic energy.

  $1 \Rightarrow 2$:

  Since $D\in\mathcal{D}_{N,1}$ there exists a spectral decomposition
  \begin{equation}
    B(\vec{r},\vec{s}) = \sum_k \lambda_k \phi_k(\vec{r})\phi_k(\vec{s})^*,
  \end{equation}
  where $\{\phi_k\}\subset L^2(\RR^3)$ is a complete, orthonormal set,
  and where $0\leq\lambda_k\leq 2$ such that $\sum_k \lambda_k =
  N$. Of course $B(\vec{r},\vec{s}) = D(\vec{r},\vec{s})$ almost
  everywhere, but they may be pointwise different. 

  Let $K\subset\RR^3$ be compact. Restricted to $K\times K$,
  $\nabla_1\cdot\nabla_2 D = \nabla_1\cdot \nabla_2 B$ (a.e.) is
  trace-class, and we compute
  \begin{equation}
    \nabla_1\cdot\nabla_2 B(\vec{r},\vec{s}) = \sum_k \lambda_k
    \nabla\phi_k(\vec{r})\cdot\nabla\phi_k(\vec{s})^*\quad \text{a.e.}
  \end{equation}
  Let $A_k(\vec{r},\vec{s}) = 
  \nabla\phi_k(\vec{r})\cdot\nabla\phi_k(\vec{s})^*$. By assumption,
  \begin{equation}
    \tr (\nabla_1\cdot\nabla_2 B) = \sum_k \lambda_k \tr A_k = \sum_k
    \lambda_k \|\nabla\phi_k\|^2_{L^2(K)} < +\infty,
  \end{equation}
  implying that $\nabla\phi_k\in L^2(K)$ for every $K$, hence
  $\nabla\phi_k\in L^2_\loc(\RR^3)$.

  Let $G$ be given by
  \begin{equation}
    G(\vec{r},\vec{s}) = \sum_k \lambda_k^{1/2} \phi_k(\vec{r})\phi_k(\vec{s})^*.
  \end{equation}
  Clearly, $G \in L^2(\RR^3\times\RR^3)$ and $D = G^\dag * G$.
  Moreover,
  \begin{equation}
    \nabla_2 G(\vec{r},\vec{s}) = \sum_k \lambda_k^{1/2} \phi_k(\vec{r})\nabla\phi_k(\vec{s})^*.
  \end{equation}
  Computing the $L^2(\RR^3\times K)$ norm,
  \begin{align}
    \|\nabla_2 G\|^2 &= \sum_{k\ell} \lambda_k^{1/2}\lambda_\ell^{1/2}
    \braket{\phi_\ell,\phi_k}_{L^2(\RR^3)}\braket{\nabla\phi_k,\nabla\phi_\ell}_{L^2(K)}
    \notag \\ &=
    \sum_k \lambda_k \|\nabla\phi_k\|^2_{L^2(K)}.
  \end{align}

  $3\Leftrightarrow 1$:

  Let $\Gamma$ be such that $D_\Gamma = D$ a.e. We have,
  \begin{equation}
    D_\Gamma(\vec{r},\vec{s}) = \sum_i p_i \int d\vec{r}_{2:N}
    \Psi_i(\vec{r},\vec{r}_{2:N})\Psi_i(\vec{s},\vec{r}_{2:N})^*. 
  \end{equation}
  Furthermore,
  \begin{align}
    T_\alpha(\vec{r}, & \vec{s})
     := \frac{1}{2} \partial_{1,\alpha}\partial_{2,\alpha}D(\vec{r},\vec{s})
    \notag \\ & =
    \frac{1}{2} \sum_i p_i \int d\vec{r}_{2:N} 
    \partial_{1,\alpha}\Psi_i(\vec{r},\vec{r}_{2:N})\partial_{1,\alpha}\Psi_i(\vec{s},\vec{r}_{2:N})^*, 
  \end{align}
  using the definition of the weak derivative and Fubini's theorem.
  By Theorem~\ref{thm:factorization}, 
  \begin{equation}
    \tilde{T}_\alpha(\vec{r},\vec{r}) = \frac{1}{2} \sum_i p_i \int
    |\partial_{1,\alpha}\Psi_i(\vec{r},\vec{r}_{2:N})|^2 d\vec{r}_{2:N}
  \end{equation}
  for almost all $\vec{r}$. For any compact
  $K\subset\RR^3$, integration yields
  \begin{equation}
    \int_K d\vec{r} \tilde{T}_\alpha(\vec{r},\vec{r}) = \frac{1}{2}  \sum_i p_i
    \|\partial_{1,\alpha}\Psi_i\|^2_{L^2(K\times\RR^{3N-3})}. 
  \end{equation}
  Since $T_\alpha$ is positive semidefinite, the left hand side is the
  trace of $\frac{1}{2} \partial_{1,\alpha}\partial_{2,\alpha}D$.  Thus, $D$ has
  locally finite kinetic energy if and only if any representing
  $\Gamma\mapsto D$ has locally finite kinetic energy.

\end{proof}

\subsection{Proof of Theorem\;\ref{thm:currents}}

\begin{proof}
  Most of the proof is similar that of 
  Theorem~\ref{thm:lockin}, so we skip some details.

  Let $\Gamma$ be such that $D_\Gamma = D$ almost everywhere. The state $\Gamma$
  has a locally finite kinetic energy by
  Theorem~\ref{thm:lockin}. By a reasoning similar to that of
  the proof of this lemma, we obtain 
  \begin{equation}
    c_{\Gamma,\alpha}(\vec{r}) = c_\alpha(\vec{r}) = [\diag
    (-\rmi\partial_{2,\alpha} G)^\dag\ast G](\vec{r},\vec{r}) 
  \end{equation}
  almost everywhere, independently of $\Gamma$. Taking the absolute
  value, integrating over a compact $K\subset\RR^3$ and applying the
  Cauchy--Schwarz inequality,  we obtain the bound 
  \begin{align}
    \int_K |c_\alpha(\vec{r})| \mathrm d\vec{r} &\leq \|\partial_{2,\alpha}
    G\|_{L^2(\RR^3\times K)} \|G\|_{L^2(\RR^3\times K)} \nonumber \\ &< +\infty.
  \end{align}
\end{proof}


\begin{thebibliography}{22}%
\makeatletter
\providecommand \@ifxundefined [1]{%
 \@ifx{#1\undefined}
}%
\providecommand \@ifnum [1]{%
 \ifnum #1\expandafter \@firstoftwo
 \else \expandafter \@secondoftwo
 \fi
}%
\providecommand \@ifx [1]{%
 \ifx #1\expandafter \@firstoftwo
 \else \expandafter \@secondoftwo
 \fi
}%
\providecommand \natexlab [1]{#1}%
\providecommand \enquote  [1]{``#1''}%
\providecommand \bibnamefont  [1]{#1}%
\providecommand \bibfnamefont [1]{#1}%
\providecommand \citenamefont [1]{#1}%
\providecommand \href@noop [0]{\@secondoftwo}%
\providecommand \href [0]{\begingroup \@sanitize@url \@href}%
\providecommand \@href[1]{\@@startlink{#1}\@@href}%
\providecommand \@@href[1]{\endgroup#1\@@endlink}%
\providecommand \@sanitize@url [0]{\catcode `\\12\catcode `\$12\catcode
  `\&12\catcode `\#12\catcode `\^12\catcode `\_12\catcode `\%12\relax}%
\providecommand \@@startlink[1]{}%
\providecommand \@@endlink[0]{}%
\providecommand \url  [0]{\begingroup\@sanitize@url \@url }%
\providecommand \@url [1]{\endgroup\@href {#1}{\urlprefix }}%
\providecommand \urlprefix  [0]{URL }%
\providecommand \Eprint [0]{\href }%
\providecommand \doibase [0]{http://dx.doi.org/}%
\providecommand \selectlanguage [0]{\@gobble}%
\providecommand \bibinfo  [0]{\@secondoftwo}%
\providecommand \bibfield  [0]{\@secondoftwo}%
\providecommand \translation [1]{[#1]}%
\providecommand \BibitemOpen [0]{}%
\providecommand \bibitemStop [0]{}%
\providecommand \bibitemNoStop [0]{.\EOS\space}%
\providecommand \EOS [0]{\spacefactor3000\relax}%
\providecommand \BibitemShut  [1]{\csname bibitem#1\endcsname}%
\let\auto@bib@innerbib\@empty
\bibitem [{\citenamefont {Coleman}(1963)}]{COLEMAN_RMP35_668}%
  \BibitemOpen
  \bibfield  {author} {\bibinfo {author} {\bibfnamefont {A.~J.}\ \bibnamefont
  {Coleman}},\ }\href {\doibase 10.1103/RevModPhys.35.668} {\bibfield
  {journal} {\bibinfo  {journal} {Rev. Mod. Phys.}\ }\textbf {\bibinfo {volume}
  {35}},\ \bibinfo {pages} {668} (\bibinfo {year} {1963})}\BibitemShut
  {NoStop}%
\bibitem [{\citenamefont {Macke}(1955)}]{MACKE_PR100_992}%
  \BibitemOpen
  \bibfield  {author} {\bibinfo {author} {\bibfnamefont {W.}~\bibnamefont
  {Macke}},\ }\href {\doibase 10.1103/PhysRev.100.992} {\bibfield  {journal}
  {\bibinfo  {journal} {Phys. Rev.}\ }\textbf {\bibinfo {volume} {100}},\
  \bibinfo {pages} {992} (\bibinfo {year} {1955})}\BibitemShut {NoStop}%
\bibitem [{\citenamefont {Gilbert}(1975)}]{GILBERT_PRB12_2111}%
  \BibitemOpen
  \bibfield  {author} {\bibinfo {author} {\bibfnamefont {T.~L.}\ \bibnamefont
  {Gilbert}},\ }\href {\doibase 10.1103/PhysRevB.12.2111} {\bibfield  {journal}
  {\bibinfo  {journal} {Phys. Rev. B}\ }\textbf {\bibinfo {volume} {12}},\
  \bibinfo {pages} {2111} (\bibinfo {year} {1975})}\BibitemShut {NoStop}%
\bibitem [{\citenamefont {Harriman}(1981)}]{HARRIMAN_PRA24_680}%
  \BibitemOpen
  \bibfield  {author} {\bibinfo {author} {\bibfnamefont {J.~E.}\ \bibnamefont
  {Harriman}},\ }\href {\doibase 10.1103/PhysRevA.24.680} {\bibfield  {journal}
  {\bibinfo  {journal} {Phys. Rev. A}\ }\textbf {\bibinfo {volume} {24}},\
  \bibinfo {pages} {680} (\bibinfo {year} {1981})}\BibitemShut {NoStop}%
\bibitem [{\citenamefont {Zumbach}\ and\ \citenamefont
  {Maschke}(1983)}]{ZUMBACH_PRA28_544}%
  \BibitemOpen
  \bibfield  {author} {\bibinfo {author} {\bibfnamefont {G.}~\bibnamefont
  {Zumbach}}\ and\ \bibinfo {author} {\bibfnamefont {K.}~\bibnamefont
  {Maschke}},\ }\href {\doibase 10.1103/PhysRevA.28.544} {\bibfield  {journal}
  {\bibinfo  {journal} {Phys. Rev. A}\ }\textbf {\bibinfo {volume} {28}},\
  \bibinfo {pages} {544} (\bibinfo {year} {1983})}\BibitemShut {NoStop}%
\bibitem [{\citenamefont {Ghosh}\ and\ \citenamefont
  {Parr}(1985)}]{GHOSH_JCP82_3307}%
  \BibitemOpen
  \bibfield  {author} {\bibinfo {author} {\bibfnamefont {S.~K.}\ \bibnamefont
  {Ghosh}}\ and\ \bibinfo {author} {\bibfnamefont {R.~G.}\ \bibnamefont
  {Parr}},\ }\href {http://link.aip.org/link/?JCP/82/3307/1} {\bibfield
  {journal} {\bibinfo  {journal} {J. Chem. Phys.}\ }\textbf {\bibinfo {volume}
  {82}},\ \bibinfo {pages} {3307} (\bibinfo {year} {1985})}\BibitemShut
  {NoStop}%
\bibitem [{\citenamefont {Ruskai}\ and\ \citenamefont
  {Harriman}(1968)}]{RUSKAI_PR169_101}%
  \BibitemOpen
  \bibfield  {author} {\bibinfo {author} {\bibfnamefont {M.~B.}\ \bibnamefont
  {Ruskai}}\ and\ \bibinfo {author} {\bibfnamefont {J.~E.}\ \bibnamefont
  {Harriman}},\ }\href {\doibase 10.1103/PhysRev.169.101} {\bibfield  {journal}
  {\bibinfo  {journal} {Phys. Rev.}\ }\textbf {\bibinfo {volume} {169}},\
  \bibinfo {pages} {101} (\bibinfo {year} {1968})}\BibitemShut {NoStop}%
\bibitem [{\citenamefont {Lude{\~n}a}(1985)}]{LUDENA_JMST123_371}%
  \BibitemOpen
  \bibfield  {author} {\bibinfo {author} {\bibfnamefont {E.~V.}\ \bibnamefont
  {Lude{\~n}a}},\ }\href@noop {} {\bibfield  {journal} {\bibinfo  {journal} {J.
  Mol. Struct. THEOCHEM}\ }\textbf {\bibinfo {volume} {123}},\ \bibinfo {pages}
  {371} (\bibinfo {year} {1985})}\BibitemShut {NoStop}%
\bibitem [{\citenamefont {Ruskai}(2007)}]{RUSKAI_JPA40_F961}%
  \BibitemOpen
  \bibfield  {author} {\bibinfo {author} {\bibfnamefont {M.~B.}\ \bibnamefont
  {Ruskai}},\ }\href@noop {} {\bibfield  {journal} {\bibinfo  {journal} {J.
  Phys. A}\ }\textbf {\bibinfo {volume} {40}},\ \bibinfo {pages} {F961}
  (\bibinfo {year} {2007})}\BibitemShut {NoStop}%
\bibitem [{\citenamefont {Ghosh}\ and\ \citenamefont
  {Dhara}(1988)}]{GHOSH_PRA38_1149}%
  \BibitemOpen
  \bibfield  {author} {\bibinfo {author} {\bibfnamefont {S.~K.}\ \bibnamefont
  {Ghosh}}\ and\ \bibinfo {author} {\bibfnamefont {A.~K.}\ \bibnamefont
  {Dhara}},\ }\href {\doibase 10.1103/PhysRevA.38.1149} {\bibfield  {journal}
  {\bibinfo  {journal} {Phys. Rev. A}\ }\textbf {\bibinfo {volume} {38}},\
  \bibinfo {pages} {1149} (\bibinfo {year} {1988})}\BibitemShut {NoStop}%
\bibitem [{\citenamefont {Lieb}\ and\ \citenamefont
  {Schrader}(2013)}]{LIEB_SCHRADER_2013}%
  \BibitemOpen
  \bibfield  {author} {\bibinfo {author} {\bibfnamefont {E.~H.}\ \bibnamefont
  {Lieb}}\ and\ \bibinfo {author} {\bibfnamefont {R.}~\bibnamefont
  {Schrader}},\ }\href {\doibase 10.1103/PhysRevA.88.032516} {\bibfield
  {journal} {\bibinfo  {journal} {Phys. Rev. A}\ }\textbf {\bibinfo {volume}
  {88}},\ \bibinfo {pages} {032516} (\bibinfo {year} {2013})},\ \bibinfo {note}
  {arXiv:1308.2664v1}\BibitemShut {NoStop}%
\bibitem [{\citenamefont {Vignale}\ and\ \citenamefont
  {Rasolt}(1987)}]{VIGNALE_PRL59_2360}%
  \BibitemOpen
  \bibfield  {author} {\bibinfo {author} {\bibfnamefont {G.}~\bibnamefont
  {Vignale}}\ and\ \bibinfo {author} {\bibfnamefont {M.}~\bibnamefont
  {Rasolt}},\ }\href {\doibase 10.1103/PhysRevLett.59.2360} {\bibfield
  {journal} {\bibinfo  {journal} {Phys. Rev. Lett.}\ }\textbf {\bibinfo
  {volume} {59}},\ \bibinfo {pages} {2360} (\bibinfo {year}
  {1987})}\BibitemShut {NoStop}%
\bibitem [{\citenamefont {Canc{\`e}s}(2001)}]{CANCES_JCP114_10616}%
  \BibitemOpen
  \bibfield  {author} {\bibinfo {author} {\bibfnamefont {E.}~\bibnamefont
  {Canc{\`e}s}},\ }\href {\doibase 10.1063/1.1373430} {\bibfield  {journal}
  {\bibinfo  {journal} {J. Chem. Phys,}\ }\textbf {\bibinfo {volume} {114}},\
  \bibinfo {pages} {10616} (\bibinfo {year} {2001})}\BibitemShut {NoStop}%
\bibitem [{\citenamefont {Cances}\ \emph {et~al.}(2003)\citenamefont {Cances},
  \citenamefont {Kudin}, \citenamefont {Scuseria},\ and\ \citenamefont
  {Turinici}}]{CANCES_JCP118_5364}%
  \BibitemOpen
  \bibfield  {author} {\bibinfo {author} {\bibfnamefont {E.}~\bibnamefont
  {Cances}}, \bibinfo {author} {\bibfnamefont {K.~N.}\ \bibnamefont {Kudin}},
  \bibinfo {author} {\bibfnamefont {G.~E.}\ \bibnamefont {Scuseria}}, \ and\
  \bibinfo {author} {\bibfnamefont {G.}~\bibnamefont {Turinici}},\ }\href@noop
  {} {\bibfield  {journal} {\bibinfo  {journal} {J. Chem. Phys.}\ }\textbf
  {\bibinfo {volume} {118}},\ \bibinfo {pages} {5364} (\bibinfo {year}
  {2003})}\BibitemShut {NoStop}%
\bibitem [{\citenamefont {Kraisler}\ \emph {et~al.}(2009)\citenamefont
  {Kraisler}, \citenamefont {Makov}, \citenamefont {Argaman},\ and\
  \citenamefont {Kelson}}]{KRAISLER_PRA80_032115}%
  \BibitemOpen
  \bibfield  {author} {\bibinfo {author} {\bibfnamefont {E.}~\bibnamefont
  {Kraisler}}, \bibinfo {author} {\bibfnamefont {G.}~\bibnamefont {Makov}},
  \bibinfo {author} {\bibfnamefont {N.}~\bibnamefont {Argaman}}, \ and\
  \bibinfo {author} {\bibfnamefont {I.}~\bibnamefont {Kelson}},\ }\href
  {http://link.aps.org/abstract/PRA/v80/e032115} {\bibfield  {journal}
  {\bibinfo  {journal} {Phys. Rev. A}\ }\textbf {\bibinfo {volume} {80}},\
  \bibinfo {eid} {032115} (\bibinfo {year} {2009})}\BibitemShut {NoStop}%
\bibitem [{\citenamefont {Nygaard}\ and\ \citenamefont
  {Olsen}(2013)}]{NYGAARD_JCP138_094109}%
  \BibitemOpen
  \bibfield  {author} {\bibinfo {author} {\bibfnamefont {C.~R.}\ \bibnamefont
  {Nygaard}}\ and\ \bibinfo {author} {\bibfnamefont {J.}~\bibnamefont
  {Olsen}},\ }\href {\doibase 10.1063/1.4791571} {\bibfield  {journal}
  {\bibinfo  {journal} {J. Chem. Phys.}\ }\textbf {\bibinfo {volume} {138}},\
  \bibinfo {eid} {094109} (\bibinfo {year} {2013})}\BibitemShut {NoStop}%
\bibitem [{\citenamefont {Brislawn}(1988)}]{BRISLAWN_1988}%
  \BibitemOpen
  \bibfield  {author} {\bibinfo {author} {\bibfnamefont {C.}~\bibnamefont
  {Brislawn}},\ }\href@noop {} {\bibfield  {journal} {\bibinfo  {journal}
  {Proc. Am. Math. Soc.}\ }\textbf {\bibinfo {volume} {104}},\ \bibinfo {pages}
  {1181} (\bibinfo {year} {1988})}\BibitemShut {NoStop}%
\bibitem [{\citenamefont {Reed}\ and\ \citenamefont
  {Simon}(1980)}]{REED_SIMON_1980}%
  \BibitemOpen
  \bibfield  {author} {\bibinfo {author} {\bibfnamefont {M.}~\bibnamefont
  {Reed}}\ and\ \bibinfo {author} {\bibfnamefont {B.}~\bibnamefont {Simon}},\
  }\href@noop {} {\emph {\bibinfo {title} {Methods of modern mathematical
  physics I: Functional analysis}}}\ (\bibinfo  {publisher} {Academic Press},\
  \bibinfo {address} {San Diego, CA},\ \bibinfo {year} {1980})\BibitemShut
  {NoStop}%
\bibitem [{\citenamefont {Parr}\ and\ \citenamefont
  {Yang}(1989)}]{PARR_YANG_1989}%
  \BibitemOpen
  \bibfield  {author} {\bibinfo {author} {\bibfnamefont {R.}~\bibnamefont
  {Parr}}\ and\ \bibinfo {author} {\bibfnamefont {W.}~\bibnamefont {Yang}},\
  }\href@noop {} {\emph {\bibinfo {title} {{Density Functional Theory of Atoms
  and Molecules}}}}\ (\bibinfo  {publisher} {Oxford University Press},\
  \bibinfo {address} {New York},\ \bibinfo {year} {1989})\BibitemShut {NoStop}%
\bibitem [{\citenamefont {Evans}(2000)}]{EVANS_2000}%
  \BibitemOpen
  \bibfield  {author} {\bibinfo {author} {\bibfnamefont {L.}~\bibnamefont
  {Evans}},\ }\href@noop {} {\emph {\bibinfo {title} {{Partial Differential
  Equations}}}}\ (\bibinfo  {publisher} {American Mathematical Society},\
  \bibinfo {address} {Providence, R.I.},\ \bibinfo {year} {2000})\BibitemShut
  {NoStop}%
\bibitem [{\citenamefont {Vignale}\ and\ \citenamefont
  {Rasolt}(1988)}]{VIGNALE_PRB37_10685}%
  \BibitemOpen
  \bibfield  {author} {\bibinfo {author} {\bibfnamefont {G.}~\bibnamefont
  {Vignale}}\ and\ \bibinfo {author} {\bibfnamefont {M.}~\bibnamefont
  {Rasolt}},\ }\href {\doibase 10.1103/PhysRevB.37.10685} {\bibfield  {journal}
  {\bibinfo  {journal} {Phys. Rev. B}\ }\textbf {\bibinfo {volume} {37}},\
  \bibinfo {pages} {10685} (\bibinfo {year} {1988})}\BibitemShut {NoStop}%
\bibitem [{\citenamefont {Stein}(1970)}]{STEIN_1970}%
  \BibitemOpen
  \bibfield  {author} {\bibinfo {author} {\bibfnamefont {E.}~\bibnamefont
  {Stein}},\ }\href@noop {} {\emph {\bibinfo {title} {Singular integrals and
  differentiability properties of functions}}}\ (\bibinfo  {publisher}
  {Princeton Univ. Press},\ \bibinfo {address} {Princeton, N.J.},\ \bibinfo
  {year} {1970})\BibitemShut {NoStop}%
\end{thebibliography}

%

\end{document}